\let\proof\@undefined
\let\endproof\@undefined
\newtheorem{defn}{Definition}
\newtheorem{ass}{Assumption}
\newtheorem{lem}{Lemma}
\newtheorem{thm}{Theorem}
\newtheorem{prop}{Proposition}
\newtheorem{cor}{Corollary}
\newtheorem{ex}{Example}
\newtheorem{rem}{Remark}
\newcommand{\R}{\mathbb{R}}
\newcommand{\N}{\mathbb{N}}
\newcommand{\PP}{\mathbb{P}}
\newcommand{\PE}{\mathbb{E}}
\newcommand{\X}{\mathbb{X}}
\newcommand{\Z}{\mathbb{Z}}
\newcommand{\V}{\mathbb{V}}
\newcommand{\W}{\mathbb{W}}
\newcommand{\dd}{\mathrm{d}}
\title{
 Constraint-Tightening and Stability in Stochastic Model Predictive Control
}
\author{Matthias Lorenzen$^{1}$,
Fabrizio Dabbene$^{2}$,
Roberto Tempo$^{2}$,
and 
Frank Allg\"ower$^{1}$
\thanks{$^{1}$Institute for Systems Theory and Automatic Control, University of Stuttgart, 
Germany
{\tt\small \{matthias.lorenzen,\allowbreak frank.allgower\}@ist.uni-stuttgart.de}. 
The authors would like to thank the German Research Foundation (DFG) for financial support of the project within the Cluster of Excellence in Simulation Technology (EXC 310/2) at the University of Stuttgart. \newline
$^{2}$CNR-IEIIT, Politecnico di Torino, 
Italy
{\tt\small \{roberto.tempo,\allowbreak fabrizio.dabbene\}@polito.it}. 
This research was partially supported by the joint CNR-JST international lab COOPS. 
\newline
\copyright 2015 IEEE. Personal use of this material is permitted. Permission from IEEE must be obtained for all other uses, in any current or future media, including reprinting/republishing this material for advertising or promotional purposes, creating new collective works, for resale or redistribution to servers or lists, or reuse of any copyrighted component of this work in other works.
}%
}
\begin{document}
\maketitle

\begin{abstract}
  Constraint tightening to non-conservatively guarantee recursive feasibility and stability in Stochastic Model Predictive Control is addressed.
  Stability and feasibility requirements are considered separately, highlighting the difference between existence of a solution and feasibility of a suitable, a priori known candidate solution. 
  Subsequently, a Stochastic Model Predictive Control algorithm which unifies previous results is derived, leaving the designer the option to balance an increased feasible region against guaranteed bounds on the asymptotic average performance and convergence time. 
  Besides typical performance bounds, under mild assumptions, we prove asymptotic stability in probability of the minimal robust positively invariant set obtained by the unconstrained LQ-optimal controller.
  A numerical example, demonstrating the efficacy of the proposed approach in comparison with classical, recursively feasible Stochastic MPC and Robust MPC, is provided.
\end{abstract}

\begin{IEEEkeywords}
Stochastic model predictive control, constrained control, predictive control, chance constraints, discrete-time stochastic systems, receding horizon control, linear systems
\end{IEEEkeywords}

\section{Introduction}
It is well-known that a moving horizon scheme like Model Predictive Control (MPC) might incur significant performance degradation in the presence of uncertainty and disturbances. This fact was already recognized in early publications on dynamic programming, see for instance~\cite[Chapter 9.5]{Dreyfus1977_ArtAndTheoryOfDynProg}.
To cope with this disadvantage, in recent years Robust MPC has received a great deal of attention for linear systems~\cite{Kothare1996_RobustMPCUsingLMI,Mayne2005_RobustMPCofConstrLinSysWithBoundedDist} as well as for nonlinear systems~\cite{Magni2003_RobustMPCForNonlinearDiscrTimeSys,Bayer2013_DiscreteTimeIncrementalISSforRobustMPC,Lazar2008_ISSsuboptimalNMPCwithApplToDCDCconv}.
In many cases, a stochastic model can be formulated to represent the uncertainty and disturbance, as for instance in the case of inflow material quality and purity in a chemical process or wind speed and turbulence in aircraft or wind turbine control.
This fact, and the inherent conservativeness of robust approaches, has led to an increasing interest in Stochastic Model Predictive Control (SMPC). 
A probabilistic description of the disturbance or uncertainty allows to optimize the average performance or appropriate risk measures. 
Furthermore, allowing a (small) probability of constraint violation, by introducing so-called \emph{chance constraints}, seems more appropriate in some applications, e.g. meeting the demand in a warehouse or bounds on the temperature or concentrations in a chemical reactor. 
Besides, chance constraints lead to an increased region of attraction without changing the prediction horizon.
Still, hard constraints, e.g. due to physical limitations, can be considered in the same setup. 

The first problem in Stochastic MPC is the derivation of computationally tractable methods to propagate the uncertainty for evaluating the cost function and the chance constraints. Both are multivariate integrals, whose evaluation requires the development of suitable techniques.
A second problem in SMPC is related to the difficulty of establishing recursive feasibility. 
In order to have a well-defined control law, it is necessary to guarantee that the optimal control program, which is solved online, remains feasible at future sampling times if it is initially feasible. 
Indeed, in classical MPC, recursive feasibility is usually guaranteed through showing that the planned input trajectory remains feasible in the next optimization step. This idea is extended in Robust MPC by requiring that the input trajectory remains feasible for all possible disturbances. 

In Stochastic MPC, a certain probability of future constraint violation is in general allowed, which leads to significantly less conservative constraint tightening for the predicted input and state, because worst-case scenarios become very unlikely. 
However, in this setup, the probability distribution of the state prediction at some future time depends on both the current state and the time to go. 
Hence, even under the same control law, the violation probability changes from time $k$ to time $k+1$, which can render the optimization problem infeasible.

  The first problem, uncertainty propagation and tractable reformulation of chance constraints, has gained significant attention and different methods to evaluate exactly, approximate or bound the desired quantities have been proposed in the Stochastic MPC literature.
  An exact evaluation is in general only possible in a linear setup with Gaussian noise or finitely supported uncertainties as in~\cite{Benandini2012_StabMPCofStochConstrLinSys}.
  Approximate solutions include particle approach~\cite{Blackmore2010_ProbabilisticParticleControlApproxOfChanceConstrSMPC} or polynomial chaos expansion~\cite{Mesbah2014_StochasticNonlinearMPC}.

  Bounding methods with guaranteed probabilistic confidence include~\cite{Kanev2006RobustMPC,Calafiore2013_RMPCviaScenario}, where the authors use the so-called scenario approach to cope with the chance constraint and determine at each iteration an optimal feedback gain (\hspace{1sp}\cite{Kanev2006RobustMPC}) or feed-forward input (\hspace{1sp}\cite{Calafiore2013_RMPCviaScenario}), respectively. While this approach allows for nearly arbitrary uncertainty in the system, the online optimization effort increases dramatically and recursive feasibility cannot be guaranteed. In~\cite{Zhang2014_OnSampleSizeOfRandMPCwAppl,Schildbach2014_ScenarioApproachForSMPC} the authors use an online sampling approach as well, but show how the number of samples can be significantly reduced.
  For linear systems with parametric uncertainty,~\cite{Cheng2014_SMPCforSysWithMultAndAddDist} proposes to decompose the uncertainty tube into a stochastic part computed offline and a robust part which is computed online. The paper~\cite{Fleming2014_StochasticTubeMPCforLPVSysWithSetInclCond} computes online a stochastic tube of fixed complexity using a sampling technique, but a mixed integer problem needs to be solved online. In~\cite{Cannon2009_ProbConstrMPCforMultAndAddStochUncert} layered sets for the predicted states are defined and a Markov Chain models the transition from one layer to another.

  For linear systems with additive stochastic disturbance, the system is usually decomposed into a deterministic, \emph{nominal} part and an autonomous system involving only the uncertain part.
  The approaches can then be divided into (i) computing a confidence region for the uncertain part and using this for constraint tightening, see~\cite{Cannon2011_StochasticTubesinMPC} for an ellipsoidal confidence region, and (ii) directly tightening the constraints, given the evolution of the uncertain part, e.g.~\cite{Kouvaritakis2010_ExplicitUseOfProbConstr} and~\cite{Korda2011_StronglyFeasibleSMPC}.
  A slightly different approach is taken in~\cite{Zhang2013_SochasticMPC}, where the authors first determine a confidence region for the disturbance sequence, as well, but then employ robust optimization techniques.
  Using the same setup, in~\cite{Chatterjee2011_StochasticMPCVectorSpaceApproach} the focus is to guarantee bounded variance of the state under hard input constraints.

  The second problem, recursive feasibility, has seemingly attracted far less attention. The issue has been highlighted in~\cite{Primbs2009_StochRecedingHorizonContrOfConstrLinSysWithMultNoise} and a rigorous solution has been provided in~\cite{Kouvaritakis2010_ExplicitUseOfProbConstr,Cannon2011_StochasticTubesinMPC}, where ``recursively feasible probabilistic tubes'' for constraint tightening are proposed. Instead of considering the probability distribution $\ell$ steps ahead given the current state, the probability distribution $\ell$ steps ahead given \emph{any} realization in the first $\ell-1$ steps is considered. This essentially leads to a constraint tightening with $\ell-1$ worst-case and one stochastic prediction for each prediction time $\ell$.
  In~\cite{Korda2011_StronglyFeasibleSMPC} the authors propose to compute a control invariant region and to restrict the successor state to be inside this region. This procedure leads to a feasible region which is less restrictive, 
  but stability issues are not discussed.

 The main contribution of this paper is to propose a nonconservative Stochastic MPC scheme that is computationally tractable and guarantees recursive feasibility. This is achieved by introducing a novel approach which unifies the previous results, combining the asymptotic performance bound of~\cite{Kouvaritakis2010_ExplicitUseOfProbConstr} with the advantages of the least restrictive approach in~\cite{Korda2011_StronglyFeasibleSMPC}. 
  Unlike previous works, we explicitly study the case when the optimized input sequence does not remain feasible at the next sampling time and present a constraint tightening to bound this to a desired probability $\epsilon_f$.
  Recursive feasibility is guaranteed through an additional constraint on the first step.
  With $\epsilon_f=1$ a scheme similar to~\cite{Korda2011_StronglyFeasibleSMPC} and with $\epsilon_f=0$, SMPC with recursively feasible probabilistic tubes is recovered. 
  We introduce a constraint tightening, which allows the parameter $\epsilon_f$ to be used as a tuning parameter to balance convergence speed and performance against the size of the feasible region.
  Under mild assumptions, we prove stability in probability of the minimal robust positively invariant region obtained by the unconstrained LQ-optimal controller. 
  As suggested in~\cite{Mayne2015_RobustStochMPCRightDirection} the online algorithm is kept simple and the main computational effort is offline. 
  The resulting offline chance constrained programs are briefly discussed and an efficient solution strategy using a sampling approach is provided.

  The remainder of this paper is organized as follows. Section~\ref{sec:ProbSetup} introduces the receding horizon problem to be solved. In Section~\ref{sec:ConstrTightAndAlgo} the proposed finite horizon optimal control problem is derived, starting with a suitable constraint reformulation, followed by recursive feasibility considerations of the optimization problem and a candidate solution. The section concludes with a summary of the algorithm.
  The theoretical properties are summarized in Section~\ref{sec:properties}, where a performance bound and a stability result are derived. A discussion on constraint tightening concludes the section and demonstrates the advantages of the approach.
The computation of the offline constraint tightening is discussed in Section~\ref{sec:NumExample}, followed by numerical examples that underline the advantages of the proposed scheme. Finally, Section~\ref{sec:Concl} provides some conclusions and directions for future work. 

  Preliminary results have been presented in~\cite{Lorenzen2015_improvedConstrTighteningForSMPC}. Building on these results, methods to bound the probability that a suitable candidate solution remains feasible are introduced and the implication on system theoretic properties stability and performance are analyzed thoroughly. A discussion on how to deal with joint chance constraints is presented and the numerical example has been updated to support the theory.
  Related results for systems with parametric uncertainty have been presented in~\cite{Lorenzen2015_ScenarioBasedStochasticMPC}, where constraint tightening via offline uncertainty sampling is addressed.

\paragraph*{Notation}
The notation employed is standard. Uppercase letters are used for matrices and lower case for vectors. $[A]_j$ and $[a]_j$ denote the $j$-th row and entry of the matrix $A$ and vector $a$, respectively.
Positive (semi)definite matrices $A$ are denoted $A \succ 0$ ($A \succeq 0$) and  $\|x\|_A^2 = x^\top A x$.
The set $\N_{>0}$ denotes the positive integers and ${\N_{\ge 0} = \{0\} \cup \N_{>0}}$, similarly $\R_{> 0}$, $\R_{\ge 0}$.
The notation $\PP_k\{\mathcal A\} = \PP\{ \mathcal A | x_k \} $ denotes the conditional probability of an event $\mathcal A$ given the realization of $x_k$, similarly $\PE_k\{\mathcal A\} = \PE\{ \mathcal A | x_k \}$.
We use $x_k$ for the (measured) state at time $k$ and $x_{l|k}$ for the state predicted $l$ steps ahead at time $k$.
The set of cardinality $T$ of vectors $v_{0|k}$,\dots $v_{T-1|k}$ will be denoted by $\mathbf{v}_{T|k}$.
%
$A\oplus B = \{ a+b | a\in A, b\in B\} $, $A\ominus B= \{ a \in A | a+b \in A ~\forall b\in B\}$ denotes the Minkowski sum and the Pontryagin set difference, respectively.
To simplify the notation, we use the convention $\sum_{k=a}^{b} c_k = 0$ for $a>b$.

\section{Problem Setup} \label{sec:ProbSetup}
In this section, we first describe the system to be controlled and introduce the basic Stochastic Model Predictive Control algorithm.

\subsection{System Dynamics, Constraints, and Objective}
Consider the following linear, time-invariant system with state $x_k \in \R^n$, control input $u_k \in \R^{m}$ and additive 
disturbance $w_k \in \R^{m_w}$
\begin{equation}
  x_{k+1} = A x_k + B u_k + B_w w_k.
  \label{eqn:xsystem}
\end{equation}

The disturbance sequence $(w_k)_{k\in\N_{\ge 0}}$ is assumed to be a realization of a stochastic process $(W_k)_{k\in\N_{\ge 0}}$ satisfying the following assumption.
\begin{ass}[Bounded Random Disturbance]\label{ass:disturbance}
$W_k$ for $k=0,1,2,\dots$ are independent and identically distributed, zero mean random variables with distribution $\PP$ and support $\W$. The set $\W$ is bounded and convex.
\end{ass}

The system is subject to probabilistic constraints on the state
and hard constraints on the input
\begin{subequations}
  \begin{align}
    \PP\{ [H]_j x_{k+l} &\le [h]_j ~|~ x_k\} \ge 1-[\varepsilon]_j ~ & j\in[1,p],~ l \in \N_{> 0}\label{eqn:probConstraints} \\
    G u_{k+l} &\le g  \quad &l \in \N_{\ge 0} \label{eqn:inputConstraints}
  \end{align}
  \label{eqn:origConstraints}%
\end{subequations}
with $H \in \R^{p\times n}$, $G \in \R^{q\times m}$, $h \in \R^{p}$, $g \in \R^{q}$, $\varepsilon \in [0,1]^p$ and the assumption that $u_{k+l}$ is a measurable function in $x_{k+l}$.
Equation~\eqref{eqn:probConstraints} restricts to $[\varepsilon]_j$ the probability of violating the linear state constraint $j$ at the future time $k+l$, given the realization of the current state $x_k$.
In the following, the notation $\PP_k\{\mathcal A\} = \PP\{ \mathcal A | x_k \} $ denoting the conditional probability of an event $\mathcal A$ given the realization of $x_k$ will be used.

The control objective is to (approximately) minimize $J_\infty$, the expected value of an infinite horizon quadratic cost
\begin{equation}
  J_\infty = \lim_{t \rightarrow \infty} \PE \left\{\frac{1}{t} \sum_{i=0}^t x_i^\top Qx_i + u_i^\top Ru_i \right\}
  \label{eqn:infHorizonCost}
\end{equation}
with $Q\in \R^{n\times n}$, $Q \succ 0$, $R\in \R^{m\times m}$, $R \succ 0$.

\subsection{Receding Horizon Optimization}
To solve the control problem, a Stochastic Model Predictive Control algorithm is considered.
The approach consists of repeatedly solving an optimal control problem with finite horizon $T$, but implementing only the first control action.

As it is common in linear Robust and Stochastic MPC, e.g. ~\cite{Kouvaritakis2010_ExplicitUseOfProbConstr}, the state of the system, predicted $l$ steps ahead from time~$k$
\begin{equation*}
  x_{l|k}=z_{l|k}+e_{l|k}
\end{equation*}
is split into a deterministic, nominal part $z_{l|k} = \PE_k\left\{ x_{l|k} \right\}$ and a zero mean stochastic error part $e_{l|k}$. 
Let $K\in\R^{n}$ be a stabilizing feedback gain such that $A_{cl} = A+BK$ is Schur. 
A prestabilizing error feedback $\tilde u_k = K e_k$ is employed, which leads to the predicted input
\begin{equation}
  u_{l|k} = Ke_{l|k} + v_{l|k}
  \label{eqn:inputParam}
\end{equation}
with $v_{l|k}$ being the free SMPC optimization variables. Hence, the dynamics of the \emph{nominal system} and error are given by
\begin{subequations}
  \begin{align}
    z_{l+1|k} &= A z_{l|k} + B v_{l|k}  &z_{0|k} = x_k\label{eqn:zSys}\\
    e_{l+1|k} &= A_{cl}e_{l|k} + B_w W_{l+k}  &e_{0|k} = 0 \label{eqn:eSys}%
  \end{align}%
  \label{eqn:zAndeSys}%
\end{subequations}
where $e_{l|k}$ are zero mean random variables and $z_{l|k}$ are deterministic.

The finite horizon cost $J_T(\mathbf x_{T+1|k},\mathbf{u}_{T|k})$ to be minimized at time $k$ is defined as 
\begin{equation}
  J_T(\mathbf x_{T+1|k},\mathbf{u}_{T|k}) 
    = \PE_k\left\{ \sum_{l=0}^{T-1} \left( x_{l|k}^\top Qx_{l|k} + u_{l|k}^\top Ru_{l|k} \right) + x_{T|k}^\top P x_{T|k}\right\}
  \label{eqn:finiteHorizonCostFnc}
\end{equation}
where $P$ is the solution to the discrete-time Lyapunov equation $A_{cl}^\top P A_{cl}+  Q + K^\top R K = P$. 
The expected value can be computed explicitly, which gives a quadratic, finite horizon cost function in the deterministic variables $z_{l|k}$ and $v_{l|k}$
\begin{equation}
  J_T(\mathbf z_{T+1|k},\mathbf{v}_{T|k}) =
  \sum_{l=0}^{T-1} \left( z_{l|k}^\top Qz_{l|k} + v_{l|k}^\top Rv_{l|k} \right) + z_{T|k}^\top P z_{T|k} + c
  \label{eqn:finiteHorizonCostFncDet}
\end{equation}
where $c = \PE_k \left\{ \sum_{i=0}^{T-1} e_{i|k}^\top (Q +K^\top RK)e_{i|k} + e_{n|k}^\top Pe_{n|k} \right\}$ is a constant term which can be neglected in the optimization.

The prototype finite horizon optimal control problem to be solved online is given in the following definition, where the constraint sets $\Z_{l}$ and $\V_{l}$ are derived from the chance constraints~\eqref{eqn:origConstraints} and some suitable terminal constraint as described in the next section.
\begin{defn}[Finite Horizon Optimal Control Problem]
  Given the system dynamics~\eqref{eqn:zAndeSys}, cost~\eqref{eqn:finiteHorizonCostFncDet} and nominal constraint sets $\Z_l$, $\V_l$ and $\Z_f$, the SMPC finite horizon optimization problem 
  is
  \vspace{-0.5\baselineskip}
  \begin{subequations}
    \begin{align}
      \min_{\mathbf{z}_{T+1|k}, \mathbf{v}_{T|k}} ~& J_T(\mathbf z_{T+1|k},\mathbf{v}_{T|k}) \label{seqn:optFnc}\\
      \text{s.t.} ~& z_{l+1|k} = A z_{l|k} + B v_{l|k},  \quad z_{0|k} = x_k  \nonumber\\
      ~& z_{l|k} \in \Z_{l}, ~ l\in[1,T] \label{seqn:protoConstr}\\
      ~& v_{l|k} \in \V_{l}, ~ l\in[0,T-1] \nonumber \\
      ~& z_{T|k} \in \Z_{f}. \nonumber
    \end{align}%
    \label{eqn:RecHorizonOptProg}%
  \end{subequations}%
  \vspace{-\baselineskip}
\end{defn}
The minimizer of~\eqref{eqn:RecHorizonOptProg}, which depends on the state $x_k$, is denoted $(z_{0|k}^*,\ldots,z_{T|k}^*,v_{0|k}^*,\ldots,v_{T|k}^*)$ and the SMPC control law is $u_k = v_{0|k}^*$.
The set of feasible decision variables for a given state $x_k$ is defined as
\begin{equation*}
  \mathbb D(x_k) = \left\{ \mathbf{z}_{T+1|k}, \mathbf{v}_{T|k} \in \R^{n(T+1)+mT} ~|~ \eqref{seqn:protoConstr}  \right\}.
\end{equation*}

In order to have a well-defined control law, it is necessary to ensure that, if initially feasible, the optimal control problem remains feasible at future sampling times, a property known as \emph{recursive feasibility}.
\begin{defn}[Recursive Feasibility]
  The finite horizon optimal control problem~\eqref{eqn:RecHorizonOptProg} is recursively feasible for system~\eqref{eqn:xsystem} under the SMPC control law $u_k = v_{0|k}^*$ if 
  \begin{equation*}
    \mathbb D(x_k) \neq \emptyset ~\Rightarrow~  \mathbb D(x_{k+1}) \neq \emptyset
  \end{equation*}
  for every realization $w_k \in \W$.
\end{defn}

The main goal 
is to suitably design the cost $J_T$ and constraint set $\Z_l$, $\V_l$ and $\Z_f$ of the finite horizon optimal control problem~\eqref{eqn:RecHorizonOptProg}
, such that in closed-loop the constraints~\eqref{eqn:origConstraints} are satisfied, recursive feasibility is ensured and the system is stabilized.

\section{Constraint Tightening and Stochastic MPC Algorithm} \label{sec:ConstrTightAndAlgo}
This section addresses the Stochastic MPC synthesis part. First, the deterministic, nonconservative constraint sets $\Z_l$ and $\V_l$ for the nominal system are derived, such that the constraints~\eqref{eqn:origConstraints} for system~\eqref{eqn:xsystem} hold in closed-loop under the SMPC control law.
These constraint sets are further modified to provide stochastic stability guarantees and recursive feasibility under all admissible disturbance sequences. We discuss the difference between existence of an a priori unknown feasible solution and feasibility of an a priori known candidate solution, which is unique to Stochastic MPC and plays a crucial role in proving stability.
A second constraint tightening is presented, where the probability of a given candidate solution being infeasible is a design parameter.
The section concludes with the resulting SMPC algorithm.

\subsection{Constraint Tightening} \label{ssec:ConstrTightening}
Given the evolution of the disturbance~\eqref{eqn:eSys}, similar to~\cite{Chisci2001_SystemsWithPersistentDisturbanceMPCwithrestrConstr, Kouvaritakis2010_ExplicitUseOfProbConstr}, we directly compute tightened constraints offline. However, we neither aim at the computation of recursively feasible probabilistic tubes nor at robust constraint tightening for the input.

\subsubsection*{State Constraints}
The probabilistic state constraints~\eqref{eqn:probConstraints} can non-conservatively be rewritten in terms of convex, linear constraint sets $\Z_l$ on the predicted nominal state $z_{l|k}$, as stated in the following proposition.
\begin{prop} \label{prop:constrSatisf}
  The system~\eqref{eqn:xsystem} satisfies the chance constraints~\eqref{eqn:probConstraints} for $k=1,\ldots,T$ and $j=1,\ldots,p$ if and only if
  the nominal system~\eqref{eqn:zSys} satisfies the constraints $z_{l|k} \in \Z_l$ with
  \begin{equation}
    \Z_l = \{z \in \R^n ~|~ H z \le \eta_{l} \} \quad l \in [1, T]
    \label{eqn:detStateConstraint}
  \end{equation}
  where $\eta_l$ is given by
  \begin{equation}
    \begin{aligned}
      {[\eta_{l}]_j} = \max_{\eta} &~ \eta\\
      \text{s.t.} &~ \PP_k\left\{ \eta \le [h]_j -  [H]_j  e_{l|k} \right\} \ge 1- [\varepsilon]_{j}, \quad j\in[1,p].
    \end{aligned}
    \label{eqn:offlineChanceConstrProgr}
  \end{equation}
\end{prop}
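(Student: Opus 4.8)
The plan is to exploit the additive split $x_{l|k} = z_{l|k} + e_{l|k}$ together with the fact that, conditioned on $x_k$, the nominal state $z_{l|k} = \PE_k\{x_{l|k}\}$ is deterministic while all the randomness is carried by $e_{l|k}$. First I would substitute this decomposition into the $j$-th chance constraint and pull the deterministic term out of the probability, writing
\[
  \PP_k\{[H]_j x_{l|k} \le [h]_j\} = \PP_k\{[H]_j z_{l|k} \le [h]_j - [H]_j e_{l|k}\}.
\]
This reduces the question to the following: for a fixed scalar $c = [H]_j z_{l|k}$, when does $\PP_k\{c \le [h]_j - [H]_j e_{l|k}\} \ge 1 - [\varepsilon]_j$ hold? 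That is exactly the constraint appearing in the program~\eqref{eqn:offlineChanceConstrProgr} that defines $[\eta_l]_j$.

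The key structural observation is monotonicity. Setting $Y = [H]_j e_{l|k}$ with conditional CDF $F_Y$, the map $\eta \mapsto \PP_k\{\eta \le [h]_j - [H]_j e_{l|k}\} = F_Y([h]_j - \eta)$ is non-increasing in $\eta$. Hence the feasible set $\{\eta : \PP_k\{\eta \le [h]_j - [H]_j e_{l|k}\} \ge 1-[\varepsilon]_j\}$ is a half-line bounded above, and $[\eta_l]_j$ is precisely its right endpoint. Boundedness of $\W$ (Assumption~\ref{ass:disturbance}) propagated through the linear error recursion~\eqref{eqn:eSys} gives $Y$ bounded support, so the supremum is finite, and left-continuity of $\eta \mapsto F_Y([h]_j - \eta)$ (inherited from right-continuity of $F_Y$) ensures the maximum is attained. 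This well-posedness is the one point requiring care; the rest is purely logical.

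With this threshold characterization in hand both implications follow at once. For sufficiency, if $z_{l|k} \in \Z_l$ then $[H]_j z_{l|k} \le [\eta_l]_j$ componentwise, and since $[\eta_l]_j$ itself satisfies the probabilistic inequality, the event inclusion $\{[\eta_l]_j \le [h]_j - [H]_j e_{l|k}\} \subseteq \{[H]_j z_{l|k} \le [h]_j - [H]_j e_{l|k}\}$ transfers the bound and, re-assembling $z_{l|k} + e_{l|k} = x_{l|k}$, recovers the chance constraint. For necessity, if the chance constraint holds then $c = [H]_j z_{l|k}$ is feasible for~\eqref{eqn:offlineChanceConstrProgr}, so by maximality of $[\eta_l]_j$ we get $[H]_j z_{l|k} \le [\eta_l]_j$, i.e. $z_{l|k} \in \Z_l$. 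Repeating the argument for every $j \in [1,p]$ and prediction step $l \in [1,T]$ yields the stated equivalence. The main obstacle is thus the technical verification that the maximum in~\eqref{eqn:offlineChanceConstrProgr} is well-defined and attained; the core content is simply the deterministic--stochastic split combined with monotonicity of the tail probability in the offset $\eta$.
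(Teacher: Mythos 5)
Your proof is correct and follows essentially the same route as the paper: the same split $x_{l|k}=z_{l|k}+e_{l|k}$, the same threshold characterization of the chance constraint via monotonicity in $\eta$, and the same well-posedness argument for the maximum in~\eqref{eqn:offlineChanceConstrProgr} (the paper establishes attainment by flipping signs and invoking right-continuity of the CDF of $[H]_j e_{l|k}-[h]_j$, which is exactly your left-continuity of $\eta \mapsto F_Y([h]_j-\eta)$ in mirrored form).
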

\begin{proof}
The constraint~\eqref{eqn:probConstraints} can be rewritten in terms of $z_{l|k}$ and $e_{l|k}$ as
\begin{equation}
  \PP_k\left\{  [H]_j z_{l|k} \le [h]_j -  [H]_j e_{l|k} \right\} \ge 1-[\varepsilon]_{j}
  \label{eqn:indProbConstraintsInZ}
\end{equation}
with $e_{l|k}$ being the solution to~\eqref{eqn:eSys}.
Equation~\eqref{eqn:indProbConstraintsInZ} is equal to $\exists \tilde \eta \in \R$ s.t. $[H]_j z_{l|k} \le \tilde \eta$ and $\PP_k\left\{  \tilde \eta \le [h]_j -  [H]_j e_{l|k} \right\} \ge 1-[\varepsilon]_{j}$. This is equal to $[H]_j z_{l|k} \le \eta$, with $\eta = \max_{\tilde \eta} \tilde \eta$ s.t. $\PP_k\left\{  \tilde \eta \le [h]_j -  [H]_j e_{l|k} \right\} \ge 1-[\varepsilon]_{j}$.
The maximum value exists as~\eqref{eqn:offlineChanceConstrProgr} can equivalently be written as
  \begin{equation*}
    \begin{aligned}
      {-[\eta_{l}]_j} = \min_{\eta} &~\eta \\
\text{s.t.} &~ \PP_k([H]_j e_{l|k} - [h]_j \le \eta  ) \ge 1 - [\varepsilon]_{j}.
    \end{aligned}
  \end{equation*}
  By Assumption~\ref{ass:disturbance} on the disturbance, the cumulative density function $F_{He-h}$ for the random variable $[H]_j e_{l|k} - [h]_j$ exists and is right-continuous. Using $F_{He-h}$, the constraint can be written as $ F_{He-h}(\eta) \ge 1 - [\varepsilon]_{j}$ which concludes the proof.
\end{proof}
Proposition~\ref{prop:constrSatisf} leads to $Tp$ independent, one dimensional, linear chance constrained optimization problems~\eqref{eqn:offlineChanceConstrProgr} that can be solved offline. Computational issues will be addressed in Section~\ref{ssec:SolvingSingleCC} and in the following the program~\eqref{eqn:offlineChanceConstrProgr} will be assumed to be solved. 
Note that the random variable $e_{l|k}$ does neither depend on the realization of the state $x_k$ at time $k$ nor at the optimization variables $\mathbf v_{T|k},~\mathbf z_{T+1|k}$. 

\subsubsection*{Input Constraints}
To decrease conservativeness, instead of a robust constraint tightening for the hard constraints on the input $u_k$, we propose a stochastic constraint tightening in the \emph{predictions}, which are restricted to optimal feed-forward instead of feedback control. In other words, we take advantage of the probabilistic nature of the disturbance and require that the (suboptimal) combination of SMPC feed-forward input sequence and static error feedback remains feasible for most, but not necessarily for all possible disturbance sequences. This is in line with the fact that at each sampling time the optimal input is recomputed and adapted to the actual disturbance realization, ensuring that the hard constraints~\eqref{eqn:inputConstraints} are satisfied.

Let $\epsilon_u \in [0,1)$ be a probabilistic level. Similarly to the state constraint tightening, we replace the original constraint~\eqref{eqn:inputConstraints} with $v_{l|k} \in \V_l$ where 
\begin{equation}
  \V_l = \left\{ v \in \R^m ~|~ G v \le \mu_l \right\} \quad l \in [0, T-1]
  \label{eqn:detInputConstraint}
\end{equation}
and $\mu_l$ is given by the solutions to $qT$ one dimensional, linear chance constrained optimization problems
\begin{equation}
  \begin{aligned}
    {[\mu_l]_j} = \max_{\mu} &~ \mu\\
    \text{s.t.} &~ \PP_{k}\left\{ \mu \le [g]_j -  [G]_j  K e_{l|k} \right\} \ge 1- \varepsilon_u, \quad j\in[1,q].
  \end{aligned}
  \label{eqn:offlineChanceConstrProgrU}
\end{equation}
We remark that in closed-loop, the hard input constraints~\eqref{eqn:inputConstraints} will be satisfied as $\mu_0 = g$.

\subsubsection*{Terminal Constraint}
We first construct a recursively feasible admissible set under a local control law and then employ a suitable tightening to determine the terminal constraint $\Z_f$ for the nominal system.
\begin{prop}[Terminal Constraint]
  For the system~\eqref{eqn:xsystem} with input $u_k=Kx_k$ let ${\X_f=\{x ~|~ H_f x \le h_f\}}$ be a (maximal) robust positively invariant polytope\footnote{
    For an in depth theoretical discussion, practical computation and polytopic approximations of $\X_f$ see~\cite{Blanchini1999_SetInvarianceInControl} for an overview or~\cite{Kolmanovsky1998_TheoryAndComputationOfDisturbanceInvariantSets} for details.}
    inside the set
  \begin{equation*}
    \tilde \X_f = \left\{ x ~|~ H A_{cl} x \le \eta_{1},~ G K x \le g \right\}
  \end{equation*}
  with $\eta_1$ according to~\eqref{eqn:offlineChanceConstrProgr}.
  For any initial condition in $\X_f$ the constraints~\eqref{eqn:origConstraints} are satisfied in closed-loop operation with the control law $u_k=Kx_k$ for all $k \ge 0$.
\end{prop}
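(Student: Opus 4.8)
The plan is to treat the deterministic input constraint and the probabilistic state constraint separately, and to reduce the $l$-step-ahead chance constraint to a single one-step calculation by exploiting the robust invariance of $\X_f$ together with the Markov structure of the closed loop $x_{k+1}=A_{cl}x_k+B_w w_k$. The input constraint is immediate: since $\X_f\subseteq\tilde\X_f$, every $x\in\X_f$ satisfies $GKx\le g$, and because $\X_f$ is robust positively invariant, an initial condition $x_0\in\X_f$ keeps $x_k\in\X_f$ for all $k\ge0$ and all admissible disturbance realizations, so $Gu_k=GKx_k\le g$ holds deterministically for every $k$.

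For the state constraint \eqref{eqn:probConstraints}, the first step I would establish is a one-step bound: for any $x\in\tilde\X_f$ the successor $x^+=A_{cl}x+B_w w$ under a single disturbance step $w$ obeys $\PP\{[H]_j x^+\le[h]_j\}\ge1-[\varepsilon]_j$ for each $j$. This follows by writing $[H]_j x^+=[H]_jA_{cl}x+[H]_jB_w w$, noting $[H]_jA_{cl}x\le[\eta_1]_j$ from membership in $\tilde\X_f$, and observing that $[H]_jB_w w$ has the same law as $[H]_j e_{1|k}$ because $e_{1|k}=B_w W_k$ by \eqref{eqn:eSys}. Consequently $\{[H]_jB_w w\le[h]_j-[\eta_1]_j\}\subseteq\{[H]_j x^+\le[h]_j\}$, and the defining program \eqref{eqn:offlineChanceConstrProgr} of $\eta_1$ lower bounds the probability of the former by $1-[\varepsilon]_j$.

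To obtain the $l$-step constraint conditioned on $x_k$, I would condition on the intermediate state $x_{k+l-1}$ and apply the tower rule. Robust invariance gives $x_{k+l-1}\in\X_f\subseteq\tilde\X_f$ almost surely whenever $x_k\in\X_f$, while independence of the disturbances makes the conditional law of $x_{k+l}$ given $x_{k+l-1}$ coincide with the one-step law above. Applying the one-step bound pointwise in $x_{k+l-1}$ and integrating yields $\PP_k\{[H]_j x_{k+l}\le[h]_j\}=\PE_k[\PP\{[H]_j x_{k+l}\le[h]_j\mid x_{k+l-1}\}]\ge1-[\varepsilon]_j$, which is precisely \eqref{eqn:probConstraints}.

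The main obstacle, and the reason for the tower-rule detour, is the mismatch in conditioning: the chance constraint is conditioned on the current state $x_k$, while the tightening $\eta_1$ and the set $\tilde\X_f$ only certify one disturbance step from an arbitrary point of $\X_f$. A direct expansion $x_{k+l}=A_{cl}^l x_k+e_{l|k}$ would instead require $HA_{cl}^l x_k\le\eta_l$ for all $x_k\in\X_f$, which is not implied by $\X_f\subseteq\tilde\X_f$; it is precisely the almost-sure invariance of the realized trajectory, combined with the Markov property, that lets the single one-step certificate $\eta_1$ suffice for every horizon $l$.
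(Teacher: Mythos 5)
Your proof is correct and follows essentially the same route as the paper: robust positive invariance of $\X_f \subseteq \tilde\X_f$ disposes of the hard input constraint, and the multi-step chance constraint~\eqref{eqn:probConstraints} is reduced to the one-step bound certified by $\eta_1$ at every state of the invariant set. The paper compresses your tower-rule/Markov step into the single remark that the one-step bound ``is sufficient for~\eqref{eqn:probConstraints}''; your explicit conditioning on $x_{k+l-1}$ is precisely the argument that remark leaves implicit.
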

\begin{proof}
  By definition, the set $\X_f$ is forward invariant for all disturbances and constraint~\eqref{eqn:inputConstraints} holds for all $x_k \in \X_f$.
  Furthermore
  \begin{equation*}
    \PP_k\{ [H]_j x_{k+1} \le [h]_j ~|~ x_{k} \} \ge 1-[\varepsilon]_j \quad \forall j\in[1,p]
  \end{equation*}
  is satisfied for all states $x_{k} \in \X_f$, which is sufficient for~\eqref{eqn:probConstraints}.
\end{proof}

To define the terminal constraint $\Z_f$ for the nominal system, a constraint tightening approach similar to~\eqref{eqn:offlineChanceConstrProgr} is needed. Let $\epsilon_f \in [0,1)$ be a probabilistic level, we define the terminal region
  \begin{equation}
    \Z_f = \{ z \in \R^{n} ~|~ H_f z \le \eta_f \}
    \label{eqn:termConstr}
  \end{equation}
  with
 \begin{equation}
  \begin{aligned}
    {[\eta_f]_j} = \max_{\eta} &~ \eta\\
    \text{s.t.} &~ \PP_{k}\left\{ \eta \le [h_f]_j -  [H_f]_j  e_{T|k} \right\} \ge 1- \varepsilon_f.
  \end{aligned}
  \label{eqn:offlineChanceConstrProgrTermConstr}
\end{equation} 

\subsection{Recursive Feasibility}\label{ssec:recFeas}
As it has been pointed out in previous works, e.g.~\cite{Primbs2009_StochRecedingHorizonContrOfConstrLinSysWithMultNoise, Kouvaritakis2010_ExplicitUseOfProbConstr}, the probability of constraint violation $\ell$ steps ahead at time $k$ is not the same as $\ell-1$ steps ahead at time $k+1$ given the realization of state $x_{k+1}$, in particular
\begin{equation*}
  Hz_{l|k}\le \eta_l \quad \nRightarrow \quad Hz_{l-1|k+1}\le \eta_{l-1}.
\end{equation*}
Hence, the tightened constraint sets~\eqref{eqn:detStateConstraint},~\eqref{eqn:detInputConstraint} and~\eqref{eqn:termConstr} do not guarantee recursive feasibility. 

A commonly used approach to recover recursive feasibility \emph{and} prove stability, is to use a mixed worst-case/stochastic prediction for constraint tightening. In~\cite{Kouvaritakis2010_ExplicitUseOfProbConstr, Korda2011_StronglyFeasibleSMPC} the constraint~\eqref{eqn:detStateConstraint} is replaced by
\begin{equation*}
  H z_{l|k} \le \eta_1 - \max_{w_i \in \W} \sum_{i=1}^{l-1}HA_{cl}^{i}w_i.
\end{equation*}

In~\cite{Korda2011_StronglyFeasibleSMPC} the authors point out that this approach is rather restrictive and leads to higher average costs if the optimal solution is ``near'' a chance constraint. 
Alternatively, if \emph{only} recursive feasibility is of interest, the authors propose to use a constraint only on the first input, to obtain a recursively feasible optimization program which 
is shown to be least restrictive.

In the following, we propose a hybrid strategy: We impose a first step constraint to guarantee recursive feasibility \emph{and} the previously introduced stochastic tube tightening with terminal constraint and cost to prove stability. At the cost of further offline reachability and controllability set computation, the proposed approach has the advantage of being less conservative compared to recursively feasible stochastic tubes, but yet guaranteed to stabilize the system at the minimal positively invariant region.

Let
\begin{equation*}
  C_T = \left\{ \begin{bmatrix} z_{0|k} \\ v_{0|k}  \end{bmatrix} \in \R^{n+m} ~|~ 
    \begin{array}[h]{l}
      \exists v_{1|k}, \ldots, v_{T-1|k} \in \R^m\\
      z_{l+1|k} = A z_{l|k} + B v_{l|k} \\
      H z_{l|k} \le \eta_{l}, ~ l\in[1,T]\\
      G v_{l|k} \le \mu_{l}, ~ l\in[0,T-1]\\
      H_f z_{T|k} \le \eta_f
    \end{array}
\right\}
\end{equation*}
be the $T$-step set and feasible first input for the nominal system~\eqref{eqn:zSys} under the tightened constraints $\Z_l$, $\V_l$ and $\Z_f$. 
The set can be computed via projection or backward recursion~\cite{Gutman1987_AlgorithmToFindMaximalStateConstrSet}. $C_T$ defines the feasible states and first inputs of the finite horizon optimal control problem.

Since the projection onto the first $n$ coordinates $C_{T,x} = \operatorname{Proj}_x(C_T)$ is not necessarily robust positively invariant with respect to the disturbance set $\W$, it is important to further compute a (maximal) robust control invariant polytope $C_{T,x}^\infty$ with the constraint $(x,u) \in C_T$.
Let $C_{T,x}^{0} = C_{T,x}$ and
\begin{equation*}
  C_{T,x}^{i+1} = \left\{ x \in C_{T,x}^i ~|~ 
    \begin{array}[h]{l}
      \exists u\in \R^{m} \text{ s.t. }  (x,u) \in C_{T}\\
      A_{cl}x + Bu \in C_{T,x}^{i} \ominus B_w \W \\
    \end{array}
  \right\},
\end{equation*}
the set $C_{T,x}^\infty$ is defined through $C_{T,x}^\infty = \cap_{i=0}^{\infty} C_{T,x}^{i}$.
The basis of a standard algorithm to compute $C_{T,x}^\infty$ is given by recursively computing $C_{T,x}^{i}$ until $C_{T,x}^i = C_{T,x}^{i+1}$ for some $i\in \N$ which implies $C_{T,x}^\infty = C_{T,x}^{i}$.
The basic idea and analysis of the sequence $C_{T,x}^i$ have been presented in~\cite{Bertsekas1972_InfTimeReachability},\cite[Section 5.3]{Blanchini2015_SetTheoreticMethodsInControl}. 


\begin{rem}
  The computation of the sets $C_T$ and $C_{T,x}^\infty$ is a long-standing problem in (linear) controller design, which has gained renewed attention in the context of Robust MPC. Efficient algorithms to exactly calculate or to approximate those sets exist, e.g.~\cite{Kolmanovsky1998_TheoryAndComputationOfDisturbanceInvariantSets,Blanchini2015_SetTheoreticMethodsInControl}. 
Matlab implementations of those algorithms as part of a toolbox can be found in, e.g.~\cite{MPT3_Toolbox,Kerrigan2000_Thesis_RobustConstraintSatisfaction-InvariantSetsAndPredictiveContr}.
\end{rem}

\subsection{Recursive Feasibility of the Candidate Solution}\label{ssec:feasCandidateSol}
Given a feasible input trajectory at time $k$, a candidate solution for time $k+1$ is given by a ``shifted solution'', as it is common in Robust and Stochastic MPC~\cite{Chisci2001_SystemsWithPersistentDisturbanceMPCwithrestrConstr,Kouvaritakis2010_ExplicitUseOfProbConstr}.
\begin{defn}[Candidate Solution]
  Given a solution $\mathbf v_{T|k}$,
  the candidate solution $\mathbf {\tilde v}_{T|k+1}$ to the SMPC optimization~\eqref{eqn:RecHorizonOptProg} at time $k+1$ for $k\ge0$ is defined by
  \begin{equation}
    \tilde{v}_{i|k+1}= \left\{ 
      \begin{aligned}
        &v_{i+1|k}+KA_{cl}^iBw_k && \quad i = 0,\ldots, T-2 \\
        &K (z_{T|k}+ A_{cl}^{i}Bw_k) && \quad i = T-1.
      \end{aligned}
      \right.
      \label{eqn:candSol}
  \end{equation}
\end{defn}

To prove asymptotic stability, not only existence of a feasible solution at each time $k$ is of interest, but also feasibility of an explicitly given \emph{candidate solution} at time $k+1$.
In this subsection, based on Section~\ref{ssec:ConstrTightening}, a refined constraint tightening is defined, which allows to explicitly bound the probability of the candidate solution being feasible in the next time step.

Let $\varepsilon_f\in [0,1)$ and $\W_f\subset \W$ be a convex $1-\varepsilon_f$ confidence region for $W_k$, i.e. ${\PP\left\{ W_k \in \W_f \right\} \ge 1-\varepsilon_f}$.
For $j=1,\dots, p$, $l = 1,\dots, T$ define 
  \begin{equation*}
    [\tilde \eta_l]_j = \min_{i=0,\dots,l-1} \{ [\hat \eta_{i,l}]_j + [\eta_{l-i}]_j \}
  \end{equation*}
with 
\begin{equation*}
  \left[ \hat \eta_{i,l} \right]_j = \min_{w_k \in \W_f} -[H]_j \sum_{\kappa=1}^{i}A_{cl}^{l-\kappa} B_w w_\kappa.
\end{equation*}
Similarly, define $\tilde \nu_l$ by replacing $H,h$ with $GK,g$, respectively.
For the terminal constraint 
let 
\begin{equation*}
  \begin{aligned}
    {[\hat \eta_{f,i}]_j} &= \min_{w_k \in \W_f} - [H_f]\sum_{\kappa = 1}^{i} A_{cl}^{T-\kappa} B_w w_\kappa,\\
    [\eta_{f,i}]_j &= \max_{\eta} \eta, ~\text{s.t. } \PP_k\{ \eta \le [h_f]_j - [H_f]_j e_{i|k}  \}
  \end{aligned}
\end{equation*}
to define 
$[\tilde \eta_f]_j = \min_{i=0,\dots,T} \{ [\hat \eta_{f,i}]_j + [\eta_{f,T-i}]_j \}$.

Tightened constraints, where the probability of infeasibility of the candidate solution can be specified a priori, are obtained by replacing $\eta_l$, $\nu_l$ and $\eta_f$ with $\tilde \eta_l$, $\tilde \nu_l$ and $\tilde \eta_f$.
\begin{prop}[Recursive Feasibility of the Candidate Solution]
  Let the state, input and terminal constraints in~\eqref{eqn:RecHorizonOptProg} be given by
  \begin{equation}
    \begin{aligned}
      \Z_l &= \{z \in \R^n ~|~ H z \le \tilde\eta_{l} \}, \quad l\in[1,T]\\
      \V_l &= \{v \in \R^m ~|~ G v \le \tilde\mu_{l} \}, \quad l\in[0,T-1]\\
      \Z_f &= \{z \in \R^n ~|~ H_f z \le \tilde\eta_f \}
    \end{aligned}%
    \label{eqn:constrEpsf}
  \end{equation}
  with $\Z_f \subseteq \Z_T$.

  If it exists $\mathbf z_{T+1|k}$ such that $(\mathbf v_{T|k},\mathbf z_{T+1|k})\in \mathbb D(x_k)$ then, with probability no smaller than $1-\epsilon_f$, $(\mathbf{\tilde v}_{T|k+1},\mathbf{\tilde z}_{T+1|k+1})\in \mathbb D(x_{k+1})$ with $\tilde z_{i+1|k+1}= A \tilde z_{i|k+1} + B \tilde v_{i|k+1}$ and $\tilde z_{0|k+1} = x_{k+1}$.
\end{prop}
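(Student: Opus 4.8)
The plan is to reduce the probabilistic statement to a deterministic one. Since $\mathbf v_{T|k}$ is given and feasible, the only source of randomness is the realized disturbance $w_k$; it therefore suffices to prove the deterministic implication that, whenever $(\mathbf v_{T|k},\mathbf z_{T+1|k})\in\mathbb D(x_k)$ and $w_k\in\W_f$, the candidate $(\tilde{\mathbf v}_{T|k+1},\tilde{\mathbf z}_{T+1|k+1})$ lies in $\mathbb D(x_{k+1})$. The conclusion then follows from $\PP\{W_k\in\W_f\}\ge 1-\epsilon_f$.

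First I would express the candidate nominal trajectory through the old one. Using $u_k=v_{0|k}$ and $e_{0|k}=0$ gives $x_{k+1}=z_{1|k}+B_w w_k$, and a short induction on the nominal recursion~\eqref{eqn:zSys} together with~\eqref{eqn:candSol} yields the shift identity $\tilde z_{i|k+1}=z_{i+1|k}+A_{cl}^i B_w w_k$ for $i=0,\dots,T-1$; since $\tilde v_{T-1|k+1}=K\tilde z_{T-1|k+1}$ closes the loop with the local gain, one further obtains $\tilde z_{T|k+1}=A_{cl}z_{T|k}+A_{cl}^T B_w w_k$. Thus every candidate state equals a shifted feasible nominal state plus the propagated effect of the single realized $w_k$.

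The crux is a telescoping property of the tightening constants. Writing $s_m:=\max_{w\in\W_f}[H]_j A_{cl}^m B_w w$, the definition of $[\hat\eta_{i,l}]_j$ gives the one-step recursion $[\hat\eta_{i,l+1}]_j=[\hat\eta_{i-1,l}]_j-s_l$. Substituting this into the definition of $[\tilde\eta_{l+1}]_j$ and re-indexing the minimum via $i\mapsto i-1$ yields
\begin{equation*}
  [\tilde\eta_{l+1}]_j+s_l\le[\tilde\eta_l]_j,\qquad l=1,\dots,T-1 .
\end{equation*}
Combined with the shift identity, the feasibility $z_{l+1|k}\in\Z_{l+1}$ at time $k$, and $[H]_j A_{cl}^l B_w w_k\le s_l$ on $\{w_k\in\W_f\}$, this gives $[H]_j\tilde z_{l|k+1}\le[\tilde\eta_{l+1}]_j+s_l\le[\tilde\eta_l]_j$, i.e.\ $\tilde z_{l|k+1}\in\Z_l$ for $l=1,\dots,T-1$. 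The identical computation with $H,h$ replaced by $GK,g$ disposes of the input constraints $\tilde v_{l|k+1}\in\V_l$ for $l=0,\dots,T-2$ (the case $l=0$ using the hard level $\mu_0=g$).

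It remains to treat the terminal block, which I expect to be the main obstacle. For the terminal state I would apply the same re-indexing to the definition of $[\tilde\eta_f]_j$; the difference from the state case is the extra factor $A_{cl}$ acting on $z_{T|k}$ in $\tilde z_{T|k+1}=A_{cl}z_{T|k}+A_{cl}^T B_w w_k$, which must be absorbed by the robust positive invariance of $\X_f$ (note $\Z_f\subseteq\X_f$, so that $A_{cl}z_{T|k}$ stays inside $\X_f$ up to a one-step disturbance margin) together with the worst-case terms $[\hat\eta_{f,\cdot}]_j$; this yields $\tilde z_{T|k+1}\in\Z_f$ on $\{w_k\in\W_f\}$. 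The remaining state constraint $\tilde z_{T|k+1}\in\Z_T$ is then free from the hypothesis $\Z_f\subseteq\Z_T$, and the terminal input $\tilde v_{T-1|k+1}=K\tilde z_{T-1|k+1}\in\V_{T-1}$ follows from the local-control feasibility encoded in $\X_f\subseteq\tilde\X_f$. Assembling all inclusions shows the candidate is feasible whenever $w_k\in\W_f$, and since $\PP\{W_k\in\W_f\}\ge 1-\epsilon_f$ the claim follows. The delicate points are checking the boundary indices ($i=0,T$), where $[\eta_{f,T-i}]_j$ must match the original $\eta_f$, and verifying that the invariance-based bound on $A_{cl}z_{T|k}$ is compatible with the re-indexed terminal tightening.
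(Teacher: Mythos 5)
Your proposal is correct and follows essentially the same route as the paper: the reduction to the event $\{w_k\in\W_f\}$, the shift identities $\tilde z_{l|k+1}=z_{l+1|k}+A_{cl}^{l}B_w w_k$ and $\tilde z_{T|k+1}=A_{cl}z_{T|k}+A_{cl}^{T}B_w w_k$, and your telescoping inequality $[\tilde\eta_{l+1}]_j+s_l\le[\tilde\eta_l]_j$ is exactly the paper's displayed chain of inequalities (drop the $i=0$ term from the minimum, split off the power-$l$ maximum, re-index $i\mapsto i-1$), merely repackaged as a standalone lemma on the tightening constants. The terminal block is handled in the paper with the same ingredients you invoke — robust invariance/reachability of $\X_f$ over $\W_f$ together with $\Z_f\subseteq\Z_T$ — and no more explicitly than in your sketch, so your proof matches the paper's level of detail throughout.
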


\begin{proof}
  With probability  $1-\epsilon_f$ it holds $w_k \in \mathbb W_f$, hence it suffices to prove the claim for $w_k \in \mathbb W_f$.

  Assume $w_k \in \mathbb \W_f$, recursive feasibility of the terminal constraint follows from robust recursive feasibility of the terminal region and robust reachability for all $w_k \in W_f$. Furthermore $\tilde z_{T|k+1} \in \Z_T$ is implied by the assumption $\Z_f \subseteq \Z_T$.

  Constraint satisfaction for the state constraints for $l < T$ follows inductively
  \begin{equation*}
    \begin{aligned}
      &{[H \tilde z_{l|k+1}]_j} = [ H z_{l+1|k} + HA_{cl}^{l}B_w w_{k}]_j \le [\tilde \eta_{l+1} + HA_{cl}^{l}B_w w_{k}]_j\\
      &= \min_{i=0,\dots,l}\left\{ [\eta_{l+1-i}]_j - \max_{w_{\kappa} \in \W_f}\left[ [H]_j \sum_{\kappa=1}^{i} A_{cl}^{l+1-\kappa}B_w w_\kappa \right] \right. \\
      &\left. \hspace{2cm} + {[H]_j}A_{cl}^{l}B_w w_{k} \right\} \\
      &\le \min_{i=1,\dots,l}\left\{ [\eta_{l+1-i}]_j - \max_{w_{\kappa} \in \W_f}\left[ [H]_j \sum_{\kappa=2}^{i} A_{cl}^{l+1-\kappa}B_w w_\kappa \right] \right.\\
      &\left. \hspace{2cm} - \max_{w \in \W_f}\left[ {[H]_j} A_{cl}^{l}B_w w \right] + [H]_j A_{cl}^{l}B_w w_{k} \right\} \\
      &\le \min_{i=0,\dots,l-1}\left\{ [\eta_{l-i}]_j - \max_{w_{\kappa} \in \W_f}\left[ [H]_j \sum_{\kappa=1}^{i} A_{cl}^{l-\kappa}B_w w_\kappa \right]  \right\} = \tilde \eta_l
    \end{aligned}
  \end{equation*}
  for all $k$, $l$ and $j$. Similarly for the input, replacing $H$ and $\eta_l$ by $GK$ and $\mu_l$.
\end{proof}

While the constraints introduced in Section~\ref{ssec:ConstrTightening} only allow for an analysis of the probability of infeasibility of the candidate solution, 
the maximal probability is a design parameter when the constraints~\eqref{eqn:constrEpsf} are employed.
This alternative constraint tightening essentially closes the gap between ``recursively feasible probabilistic tubes''~\cite{Kouvaritakis2010_ExplicitUseOfProbConstr} which are recovered with $\varepsilon_f = 0$ and the ``least restrictive'' scheme presented in~\cite{Korda2011_StronglyFeasibleSMPC} where only existence of a solution is considered. 
The impact of $\varepsilon_f$ on the convergence and provable average closed-loop cost will be highlighted in the next section. The influence on the size of the feasible region is demonstrated in the example in Section~\ref{sec:NumExample}.

\subsection{Resulting Stochastic MPC Algorithm}
The final Stochastic MPC algorithm can be divided into two parts: (i) an offline computation of the involved sets and (ii) the repeated online optimization. 

\emph{Offline:} 
Determine the tightened constraint sets $\Z_l$, $\V_l$ and $\Z_f$ according to either \eqref{eqn:offlineChanceConstrProgr},~\eqref{eqn:offlineChanceConstrProgrU}, and~\eqref{eqn:offlineChanceConstrProgrTermConstr}, or~\eqref{eqn:constrEpsf}.
Determine the first step constraint $ C_{T,x}^\infty$ according to the section~\ref{ssec:recFeas}.

\emph{Online:} For each time step $k=0,1,2,\ldots$
\begin{enumerate}
  \item Measure the current state $x_k$,
  \item Solve the linearly constrained quadratic program~\eqref{eqn:RecHorizonOptProg} 
    with additional first step constraint $ C_T^\infty$
    , i.e.
      \begin{subequations}
      \begin{align}
        (\mathbf{z}_{T+1|k}^*,\mathbf{v}_{T|k}^*) ~&= \arg \min_{\mathbf{z}_{T+1|k},\mathbf{v}_{T|k}} J_T(\mathbf z_{T+1|k},\mathbf{v}_{T|k}) \label{seqn:MPCCost}\\
        \text{s.t.} ~& z_{l+1|k} = A z_{l|k} + B v_{l|k}  \quad z_{0|k} = x_k \nonumber\\
        ~& z_{l|k} \in \Z_l, ~ l\in[1,T] \nonumber\\
        ~& v_{l|k} \in \V_l, ~ l\in[0,T-1]   \label{seqn:MPCConstr}\\
        ~& z_{T|k} \in \Z_f \nonumber\\
        ~& z_{1|k} \in  C_{T,x}^\infty \ominus B_w\W, \nonumber
      \end{align}
      \label{eqn:MPCOptProg}
      \end{subequations}
  \item Apply $u_k = v^*_{0|k}$.
\end{enumerate}

\section{Properties of the proposed SMPC Scheme} \label{sec:properties}
In this section, we formally derive the control theoretic properties of the proposed SMPC scheme, 
in particular the influence of $\epsilon_f$, the bound on the probability of the candidate solution being infeasible. 
We first derive a bound on the asymptotic average state cost, which highlights the connection to~\cite{Kouvaritakis2010_ExplicitUseOfProbConstr} and proves bounded variance of the state. 
This is followed by a proof of asymptotic stability in probability of a robust invariant set, which is novel in Stochastic MPC and shows the connection to tube based Robust MPC approaches like~\cite{Chisci2001_SystemsWithPersistentDisturbanceMPCwithrestrConstr,Mayne2005_RobustMPCofConstrLinSysWithBoundedDist}. This asymptotic behavior has previously been claimed but only shown in simulations in~\cite{Deori2014_CompApproachesToRMPC}. 
The section concludes with a discussion on offline relaxation of chance constraints in Stochastic MPC.

\subsection{Asymptotic Average Performance}
Prior to a stability analysis, we prove recursive feasibility of the SMPC algorithm, which is provided by the following proposition.
\begin{prop}[Recursive Feasibility] \label{prop:recFeas}
  Let 
  \begin{equation*}
    \tilde{\mathbb D}(x_k) = \left\{ (\mathbf z_{T+1|k}, \mathbf v_{T|k}) \in \R^{n(T+1)+mT} ~|~ ~\eqref{seqn:MPCConstr} \right\}.
  \end{equation*}
  If $(\mathbf z_{T+1|k},\mathbf v_{T|k}) \in \tilde{\mathbb D}(x_k)$, then ${\tilde{\mathbb D}(x_{k+1}) \neq \emptyset}$ for every realization $w_k \in \W$ and ${x_{k+1} = A x_k + Bv_{0|k} + B_w w_k}$ .
\end{prop}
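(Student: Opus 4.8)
The plan is to exploit the first-step constraint $z_{1|k}\in C_{T,x}^\infty \ominus B_w\W$ appearing in~\eqref{seqn:MPCConstr} to guarantee robustly---that is, for \emph{every} $w_k\in\W$ rather than only on a probabilistic subset---that the successor state falls back into the control invariant set $C_{T,x}^\infty$, and then to invoke the defining fixed-point property of $C_{T,x}^\infty$ to build an explicit feasible decision vector at time $k+1$. This is exactly the step that distinguishes this proposition from the candidate-solution proposition of Section~\ref{ssec:feasCandidateSol}, where only a $1-\epsilon_f$ guarantee is obtained.

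First I would identify the measured successor state with the nominal one-step prediction. Since $e_{0|k}=0$ we have $u_k=v_{0|k}$, so $x_{k+1}=A x_k + B v_{0|k} + B_w w_k = z_{1|k} + B_w w_k$. Feasibility $(\mathbf z_{T+1|k},\mathbf v_{T|k})\in\tilde{\mathbb D}(x_k)$ forces $z_{1|k}\in C_{T,x}^\infty \ominus B_w\W$, and the defining property of the Pontryagin difference then yields, for every $w_k\in\W$,
\begin{equation*}
  x_{k+1} = z_{1|k} + B_w w_k \in \left( C_{T,x}^\infty \ominus B_w\W \right) \oplus B_w\W \subseteq C_{T,x}^\infty.
\end{equation*}
The set-algebra identity $(S\ominus B_w\W)\oplus B_w\W\subseteq S$, which is immediate from the definition $S\ominus B_w\W=\{s\in S\mid s+d\in S~\forall d\in B_w\W\}$, is the linchpin of this robust step and should be stated explicitly.

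Next I would use that $C_{T,x}^\infty=\cap_{i\ge0} C_{T,x}^i$ is, by construction, a fixed point of the recursion defining $C_{T,x}^{i+1}$, hence robust control invariant under the constraint $(x,u)\in C_T$: for every $x\in C_{T,x}^\infty$ there exists an input such that $(x,u)\in C_T$ and the one-step successor lies in $C_{T,x}^\infty \ominus B_w\W$. Applying this at $x_{k+1}\in C_{T,x}^\infty$ supplies a first nominal input and, through membership in $C_T$, a full tail $v_{1|k+1},\dots,v_{T-1|k+1}$ whose induced nominal trajectory satisfies $z_{l|k+1}\in\Z_l$, $v_{l|k+1}\in\V_l$ and $z_{T|k+1}\in\Z_f$, while the one-step successor obeys $z_{1|k+1}\in C_{T,x}^\infty \ominus B_w\W$. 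All constraints in~\eqref{seqn:MPCConstr} are thus met, so this decision vector belongs to $\tilde{\mathbb D}(x_{k+1})$ and the set is nonempty.

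The main obstacle I anticipate is book-keeping rather than conceptual: one must reconcile the nominal successor map (which propagates $z$ through $A$ and the feed-forward $v_{0|k+1}$) with the closed-loop map $A_{cl}x+Bu$ used in the recursion for $C_{T,x}^{i+1}$, i.e.\ absorb the prestabilizing gain $K$ through the reparametrization $v_{0|k+1}=Kx_{k+1}+u$ so that the nominal first step aligns with the invariance condition, and then verify that the invariance established for the finite iterates $C_{T,x}^i$ indeed passes to the limit $C_{T,x}^\infty$. Under the cited polytopic assumptions the recursion terminates finitely, so $C_{T,x}^\infty=C_{T,x}^i$ for some $i$ and this limiting step is immediate.
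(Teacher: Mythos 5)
Your proposal is correct and follows essentially the same route as the paper's (much terser) proof: the Pontryagin-difference step $z_{1|k}\in C_{T,x}^\infty\ominus B_w\W \Rightarrow x_{k+1}\in C_{T,x}^\infty$ for every $w_k\in\W$, followed by the observation that, by construction of the recursion defining $C_{T,x}^\infty$, every point of $C_{T,x}^\infty$ admits a full feasible decision vector, i.e.\ $C_{T,x}^\infty\subseteq\{x\mid\tilde{\mathbb D}(x)\neq\emptyset\}$. Your elaboration of the fixed-point/invariance property of $C_{T,x}^\infty$ and of the $A$ versus $A_{cl}$ reparametrization simply spells out what the paper compresses into ``by construction.''
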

\begin{proof}
  From $z_{1|k} \in  C_{T,x}^\infty \ominus B_w\W$ it follows $x_{k+1} \in C_{T,x}^{\infty}$ and by construction $C_{T,x}^{\infty} \subseteq \{ x ~|~ \tilde{\mathbb D}(x) \neq \emptyset \}$.
\end{proof}

Due to the persistent excitation through the additive disturbance, it is clear that the system does not converge asymptotically to the origin, but ``oscillates'' with bounded variance around it.
The following theorem 
summarizes the constraint satisfaction and 
provides a bound on the asymptotic average stage cost. 
\begin{thm}[Main Properties]\label{thm:constrSatisAndAvgcost}
  If $x_0 \in C_{T,x}^\infty$, then the closed-loop system under the proposed SMPC control law satisfies the hard and probabilistic constraints~\eqref{eqn:origConstraints} for all future times $k$ and 
  \begin{equation*}
    \lim_{t\rightarrow \infty} \frac{1}{t} \sum_{k=0}^t \PE\left\{ \|x_k\|_Q^2 \right\} \le (1-\epsilon_f) \PE\left\{ \|B_w w\|_P^2 \right\} + \epsilon_f C
  \end{equation*}
  with $\epsilon_f$ the maximum probability that the previously planned trajectory is not feasible, $C = L~max_{w\in\W}\|B_w w\|$ and $L$ the Lipschitz constant of the optimal value function ${V(x_k) = J_T(\mathbf z_{T+1|k}^*,\mathbf{v}_{T|k}^*)}$ of~\eqref{eqn:MPCOptProg}.
\end{thm}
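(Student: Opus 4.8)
The plan is to prove the two assertions separately: constraint satisfaction follows almost directly from the earlier results, while the average-cost bound requires a stochastic-Lyapunov argument built on the optimal value function. For the probabilistic state constraints, I would first note that the refined levels satisfy $\tilde\eta_l \le \eta_l$ (taking $i=0$ in the definition of $\tilde\eta_l$ returns $\eta_l$, and the minimum can only lower it), so any feasible nominal state obeys $H z_{l|k}^* \le \eta_l$, which by Proposition~\ref{prop:constrSatisf} is exactly~\eqref{eqn:probConstraints}. Since Proposition~\ref{prop:recFeas} guarantees $\tilde{\mathbb D}(x_k)\neq\emptyset$ for every $k$ and every $w_k\in\W$, such a feasible $z_{l|k}^*$ exists at all times, so the chance constraints hold along the whole closed loop. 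The hard input constraint is immediate: the applied input is $u_k=v_{0|k}^*$ with $G v_{0|k}^* \le \mu_0 = g$ (see the remark after~\eqref{eqn:offlineChanceConstrProgrU}), and the candidate tightening leaves the first step unchanged, so $G u_k \le g$.

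For the average-cost bound I would use $V(x_k)=J_T(\mathbf z_{T+1|k}^*,\mathbf v_{T|k}^*)$ as a Lyapunov function. It is nonnegative, and as the value function of the multiparametric quadratic program~\eqref{eqn:MPCOptProg} it is convex and piecewise quadratic on the compact feasible polytope $C_{T,x}^\infty$, hence Lipschitz with some constant $L$; this is what gives meaning to $L$ and $C$. By the first-step constraint the nominal successor $z_{1|k}^*\in C_{T,x}^\infty\ominus B_w\W$ and the realized successor $x_{k+1}=z_{1|k}^*+B_w w_k\in C_{T,x}^\infty$, so both lie in the domain of $V$. Evaluating the shifted candidate with $w_k=0$ (feasible for $z_{1|k}^*$ since $0\in\W_f$) and collapsing the terminal stage cost through the Lyapunov equation $A_{cl}^\top P A_{cl}+Q+K^\top RK=P$ yields the nominal decrease $V(z_{1|k}^*)\le V(x_k)-\|x_k\|_Q^2-\|v_{0|k}^*\|_R^2$.

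Next I would produce a one-step drift inequality by splitting on the feasibility event $\mathcal F=\{w_k\in\W_f\}$, which has probability at least $1-\epsilon_f$ by the candidate-feasibility proposition. On $\mathcal F$ optimality gives $V(x_{k+1})\le J_T(\tilde{\mathbf z}_{T+1|k+1},\tilde{\mathbf v}_{T|k+1})$; writing $\tilde z_{i|k+1}=z_{i+1|k}^*+A_{cl}^i B_w w_k$ and $\tilde v_{i|k+1}=v_{i+1|k}^*+KA_{cl}^i B_w w_k$ and again telescoping the disturbance part via the Lyapunov equation, the candidate cost equals $V(x_k)-\|x_k\|_Q^2-\|v_{0|k}^*\|_R^2$ plus a term that is linear in $w_k$ plus $\|B_w w_k\|_P^2$; the linear term has zero conditional mean because $w_k$ is zero mean and independent of $x_k$. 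On the complement (probability at most $\epsilon_f$), recursive feasibility keeps~\eqref{eqn:MPCOptProg} solvable and Lipschitz continuity gives $V(x_{k+1})\le V(z_{1|k}^*)+L\|B_w w_k\|\le V(x_k)-\|x_k\|_Q^2+C$. Combining the two regimes through the law of total expectation, bounding $\PE_k\{\|B_w w_k\|_P^2\mathbf{1}_{\mathcal F}\}\le(1-\epsilon_f)\PE\{\|B_w w\|_P^2\}$ (the feasibility probability times the conditional disturbance second moment, which does not exceed the unconditional one over the confidence region) and discarding $-\|v_{0|k}^*\|_R^2\le 0$, yields $\PE_k\{V(x_{k+1})\}\le V(x_k)-\|x_k\|_Q^2+(1-\epsilon_f)\PE\{\|B_w w\|_P^2\}+\epsilon_f C$.

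Finally I would take total expectations, sum over $k=0,\dots,t$, and telescope: $\PE\{V(x_{t+1})\}-\PE\{V(x_0)\}\le -\sum_{k=0}^t\PE\{\|x_k\|_Q^2\}+(t+1)\big[(1-\epsilon_f)\PE\{\|B_w w\|_P^2\}+\epsilon_f C\big]$, and since $V\ge 0$ and $\PE\{V(x_0)\}<\infty$ for $x_0\in C_{T,x}^\infty$, dividing by $t$ and letting $t\to\infty$ gives the claim. I expect the main obstacle to be the probabilistic bookkeeping in the third step: making the conditioning on $\mathcal F$ rigorous so that the zero-mean cross terms genuinely vanish and the feasible-case second moment carries exactly the weight $1-\epsilon_f$ while the worst-case Lipschitz jump carries $\epsilon_f$, producing the precise convex combination rather than a looser constant. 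A secondary but essential technical point is justifying that the parametric-QP value function $V$ is globally Lipschitz on $C_{T,x}^\infty$, since this underpins both the infeasible-case estimate and the very definitions of $L$ and $C$.
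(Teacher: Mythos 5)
Your proposal is correct and follows essentially the same route as the paper's own proof: constraint satisfaction from Propositions~\ref{prop:constrSatisf} and~\ref{prop:recFeas} together with $\mu_0=g$, and the average-cost bound via a stochastic-Lyapunov argument on the piecewise-quadratic value function $V$ (Lipschitz on the compact set $C_{T,x}^\infty$), splitting on feasibility of the shifted candidate, collapsing the tail cost with the Lyapunov equation $A_{cl}^\top P A_{cl}+Q+K^\top RK=P$, bounding the infeasible branch by $L\max_{w\in\W}\|B_w w\|$, and concluding by iterated expectations and telescoping. The one caveat is your intermediate inequality $\PE_k\{\|B_w w_k\|_P^2\mathbf 1_{\mathcal F}\}\le(1-\epsilon_f)\PE\{\|B_w w\|_P^2\}$, whose parenthetical justification points the wrong way (the guarantee is $\PP\{\mathcal F\}\ge 1-\epsilon_f$, not $\le$); the paper instead obtains the convex combination $(1-\epsilon_f)\PE\{\|B_w w\|_P^2\}+\epsilon_f C$ by shifting probability weight in the law of total expectation toward the worse (infeasible-case) drift bound, which is legitimate precisely because $C$ dominates the feasible-case disturbance term --- this is exactly the probabilistic bookkeeping you yourself flagged as the main obstacle, and it is handled at the same (somewhat informal) level of rigor in the paper.
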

\begin{proof}
  Since, by Proposition~\ref{prop:recFeas}, the SMPC algorithm is recursively feasible, 
  chance constraint satisfaction follows from Proposition~\ref{prop:constrSatisf} and hard input constraint satisfaction from ${e_{0|k} = 0}$ and hence $\mu_0 = g$.

  To prove the second part, we use the optimal value of~\eqref{eqn:MPCOptProg} as a stochastic Lyapunov function.
  Let $V(x_k) = J_T(\mathbf z_{T+1|k}^*,\mathbf{v}_{T|k}^*)$ be the 
  optimal value function of~\eqref{eqn:MPCOptProg}, which is known to be continuous, convex and piecewise quadratic in $x_k$~\cite{Bemporad2002_ExplicitLQRforConstrSys}. Hence, a Lipschitz constant $L$ on $C_{T,x}^\infty$ exists.
  The old input trajectory does not remain feasible with at most probability $\epsilon_f$, but we can bound the cost increase of $V(x_{k+1})-V(z_{1|k})$ by $L~\max_{w\in\W} \|B_w w\|$.
  
  Let $\PE\left\{ V(x_{k+1}) | x_k, \mathbf{\tilde v}_{T|k+1}~\text{feasible}\right\}$ be the expected optimal value at time $k+1$, conditioning on the state at time $k$ and feasibility of the candidate solution $\tilde v_{l|k+1} = v_{l+1|k}^*+KA_{cl}^l B_w w_k$.
  \begin{equation*}
    \begin{aligned}
      & \PE\left\{ V(x_{k+1}) | x_k, \mathbf{\tilde v}_{T|k+1}~\text{feasible} \right\} - V(x_k) \\
      \le& \sum_{l=1}^{T-1} \left(  \|z_{l|k}^*\|_Q^2 + \|v_{l|k}^*\|_R^2  \right) + \|z_{T|k}^*\|_{(Q+K^\top R K)}^2 + \|z_{T+1|k}^*\|_P^2 ~ \\
      & + \PE\left\{ \sum_{l=1}^{T} \| A_{cl}^{l-1}B_w w_k\|_{(Q +K^\top RK)}^2 + \|A_{cl}^{T}B_w w_k \|_P^2 \right\} \\
      & -\left( \sum_{l=0}^{T-1} \left(  \|z_{l|k}^*\|_Q^2 + \|v_{l|k}^*\|_R^2  \right) + \|z_{T|k}^*\|_P^2 \right) \\
      =& \|z_{T|k}^*\|_{(Q+K^\top R K)}^2 + \|z_{T+1|k}^*\|_P^2 - \|z_{0|k}^*\|_Q^2 - \|v_{0|k}^*\|_R^2  - \|z_{T|k}^*\|_P^2  \\
      & + \PE\left\{ \|B_w w_k\|_P^2 \right\}  \\
      \le& - \|z_{0|k}\|_Q^2 + \PE\left\{ \|B_w w_k\|_P^2 \right\} = - \|x_{k}\|_Q^2 + \PE\left\{ \|B_w w_k\|_P^2 \right\}
    \end{aligned}
  \end{equation*}
  where $v_{l|k}^*$, $z_{l+1|k}^*$, $l=0,\ldots, T-1$ denote the optimal solution of~\eqref{eqn:MPCOptProg}, respectively predicted state at time $k$, and $z_{T+1|k}^* = (A+BK)z_{T|k}^*$. Note that the expected value of all $w$-$z$ cross-terms equals zero because of the zero-mean and independence assumption. Furthermore, since we defined the terminal cost as the solution to the discrete-time Lyapunov equation it holds that $A_{cl}^\top P A_{cl} + Q + K^{\top}RK = P$.
 
  Combining both cases we obtain by the law of total expectation
\begin{equation*}
  \begin{aligned}
    & \PE\left\{ V(x_{k+1}) | x_k \right\} - V(x_k) \\
    \le& (1-\epsilon_f)\left(\PE\left\{ V(x_{k+1}) | x_k,\mathbf{\tilde v}_{T|k+1}~\text{feasible} \right\} - V(x_k)\right) + \\
    & \epsilon_f \left(-\|x_{k}\|_Q^2 + ~L~\max_{w\in\W} \|B_w w\| \right) \\
    \le & -\|x_{k}\|_Q^2 + (1-\epsilon_f)\PE\left\{ \|B_w w_k\|_P^2 \right\} +  \epsilon_f C.
  \end{aligned}
\end{equation*}
The final statement follows by taking iterated expectations.
\end{proof}

\begin{rem}
  A terminal region, which is forward invariant with probability $\epsilon_f$, can be used instead of a robust forward invariant terminal region. In this case, Theorem~\ref{thm:constrSatisAndAvgcost} still holds. 
\end{rem}
\subsection{Asymptotic Stability}
In this section, we prove, under mild assumptions, the existence of a set $\X_\infty$ which is asymptotically stable in probability for the closed-loop system under the proposed Stochastic MPC algorithm. 
In particular, by the proposed SMPC control law, the same set is stabilized as with the Robust MPC proposed in~\cite{Chisci2001_SystemsWithPersistentDisturbanceMPCwithrestrConstr} or with the Stochastic MPC proposed in~\cite{Kouvaritakis2010_ExplicitUseOfProbConstr}. The different constraint tightening leads to a possibly different transient phase. 
The price to obtain a larger feasible region can be a longer convergence time before the terminal set is reached.

\begin{defn}[Asymptotic Stability in Probability]
A compact set $\mathbb{S}$ is said to be asymptotically stable in probability for system~\eqref{eqn:xsystem} with a control law $u_k=\kappa(x_k)$, if for each $\varepsilon \in \R_{>0}$ and $\rho \in [0,1)$ $\exists \delta \in \R_{>0}$ such that
\begin{equation*}
  \|x_0\|_{\mathbb{S}} \le \delta \Rightarrow  \PP\{ \sup_{k \ge 0} \|x_k\|_{\mathbb{S}} \ge \varepsilon \} \le 1- \rho
\end{equation*}
and for a neighborhood $\mathcal N_{\mathbb{S}}$ of $\mathbb{S}$, for all $\varepsilon_2 \in \R_{>0}$
\begin{equation*}
  x_0 \in \mathcal N_{\mathbb{S}} \Rightarrow \lim_{k'\rightarrow \infty} \PP\{ \sup_{k>k'} \|x_k\|_{\mathbb{S}} < \varepsilon_2 \} = 1
\end{equation*}
where $\mathcal N_{\mathbb{S}}$ is called region of attraction.
\end{defn}

To streamline the presentation, we make the following assumption on the control gain $K$, as well as two non-restrictive technical assumptions.
\begin{ass}\label{ass:basicStabAssumptions}
  ~
  \begin{itemize}
    \item The feedback gain $K$ for the prestabilizing and terminal controller is chosen to be the unconstrained LQ-optimal solution.
    \item Let $\X_\infty$ be the minimal robust positively invariant set for the system~\eqref{eqn:xsystem} with input $u_k = Kx_k$ and let $\mathbb B$ be an open unit ball in $\R^n$. 
      It exists $\lambda \in \R_{>0}$ such that $\X_\infty \oplus \lambda \mathbb B \subseteq \X_f$.
    \item The set $C_{T,x}^{\infty}$ is compact.
  \end{itemize}
\end{ass}

Under this assumption, the main result of this section, asymptotic stability of $\X_\infty$, can be formally stated.
\begin{thm}[Asymptotic Stability]\label{thm:asymptStability}
  Under Assumption~\ref{ass:basicStabAssumptions}, the 
  set $\X_\infty$ is asymptotically stable in probability with region of attraction $C_{T,x}^\infty$ for the system~\eqref{eqn:xsystem} with the proposed SMPC controller.
\end{thm}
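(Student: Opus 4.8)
The plan is to verify the two clauses of asymptotic stability in probability separately, using the optimal value function $V$ from Theorem~\ref{thm:constrSatisAndAvgcost} as a stochastic Lyapunov function and a deterministic contraction argument near $\X_\infty$. Writing $d = (1-\epsilon_f)\PE\{\|B_w w\|_P^2\} + \epsilon_f C$, Theorem~\ref{thm:constrSatisAndAvgcost} already supplies the supermartingale-type estimate $\PE\{V(x_{k+1})\mid x_k\} \le V(x_k) - \|x_k\|_Q^2 + d$, valid on the compact set $C_{T,x}^\infty$ on which $V$ is continuous, bounded, and Lipschitz.

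I would first treat the behaviour in a neighbourhood of $\X_\infty$. Because $K$ is the unconstrained LQ-optimal gain and $P$ solves the associated Riccati equation (Assumption~\ref{ass:basicStabAssumptions}), the terminal cost makes the finite-horizon nominal program coincide with the infinite-horizon LQ problem whenever the optimal predicted trajectory leaves the tightened constraints inactive; since $\X_\infty$ lies in the interior of the feasible region and $A_{cl}$ is Schur, there is a sublevel set $\mathcal V = \{x : \|x\|_{\X_\infty}\le r\}\subseteq\X_f$ on which the SMPC law collapses to $u_k = Kx_k$. On $\mathcal V$ the closed loop is $x_{k+1}=A_{cl}x_k+B_w w_k$; using that $\X_\infty$ is the minimal RPI set and measuring in a norm in which $A_{cl}$ contracts, the distance to $\X_\infty$ decays deterministically, so $\mathcal V$ is forward invariant and $\|x_k\|_{\X_\infty}\to0$ for every realization. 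The stability clause then follows without probability: choosing $\delta$ small enough (and proportional to $\varepsilon$, to absorb the norm-equivalence constant) so that $\{\|x\|_{\X_\infty}\le\delta\}\subseteq\mathcal V$, the trajectory never leaves $\{\|x\|_{\X_\infty}<\varepsilon\}$, whence the exit probability is zero for every $\rho\in[0,1)$.

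For attractivity I would show that, from any $x_0\in C_{T,x}^\infty=\mathcal N_{\X_\infty}$, the hitting time $\tau$ of $\mathcal V$ is almost surely finite. Since $0\in\X_\infty$ gives $\|x\|\ge\|x\|_{\X_\infty}$, outside $\mathcal V$ one has $\|x_k\|_Q^2\ge\lambda_{\min}(Q)\,r^2$, so if $r$ is large enough that this lower bound exceeds $d$, the stopped process $V(x_{k\wedge\tau})$ is a strict supermartingale; with $V\ge0$ and $C_{T,x}^\infty$ compact, a standard exit-time argument yields $\PP\{\tau<\infty\}=1$ and hence $\PP\{\tau\le k'\}\to1$ as $k'\to\infty$. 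Because the contraction keeps $x_k\in\mathcal V$ and drives $\|x_k\|_{\X_\infty}$ below any prescribed $\varepsilon_2$ for all $k>\tau$, I obtain $\PP\{\sup_{k>k'}\|x_k\|_{\X_\infty}<\varepsilon_2\}\ge\PP\{\tau\le k'\}\to1$, which is precisely the attractivity clause.

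The principal difficulty lies at the interface of the two regimes, and it is a size-matching issue: the supermartingale only forces the state into the disturbance-sized region $\{\|x\|_Q^2\lesssim d\}$, whereas the deterministic contraction presupposes that the SMPC law has already reduced to $u=Kx$, i.e.\ that $x\in\mathcal V\subseteq\X_f$. Reconciling these requires choosing the LQ region with $\sqrt{d/\lambda_{\min}(Q)}<r\le\lambda$, so that the descent target fits inside the region where the unconstrained LQ solution is feasible; this is exactly what the margin $\lambda$ in Assumption~\ref{ass:basicStabAssumptions} must buy, and verifying that the LQ trajectory emanating from $\mathcal V$ stays feasible for the tightened constraints, so that $u=Kx$ is genuinely optimal there, is the crux. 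Once this matching is secured, the contraction and supermartingale ingredients combine routinely, and the remaining hitting-time and maximal-inequality estimates on the compact set $C_{T,x}^\infty$ present only technical work.
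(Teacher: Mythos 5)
Your stability clause is sound and is essentially the paper's own route: inside $\X_f$ the SMPC law reduces to the unconstrained LQR (this is Lemma~\ref{lem:invarianceTerminalRegion}), and the robust contraction of $A_{cl}$ toward the minimal RPI set $\X_\infty$ then gives the first clause deterministically. The gap is in your attractivity argument, and it sits exactly at the point you yourself flag as ``the crux.'' The supermartingale drift inherited from Theorem~\ref{thm:constrSatisAndAvgcost} only forces the state into the sublevel set $\{x : \|x\|_Q^2 \le d\}$ with $d = (1-\epsilon_f)\PE\{\|B_w w\|_P^2\} + \epsilon_f C$, and nothing in Assumption~\ref{ass:basicStabAssumptions} guarantees that this set fits inside $\X_f$, let alone inside your region $\mathcal V$ where the control law is the LQR. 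The assumption asserts only the \emph{existence} of some $\lambda>0$ with $\X_\infty \oplus \lambda\mathbb{B} \subseteq \X_f$; it places no quantitative lower bound on $\lambda$, whereas $d$ is fixed by the disturbance, by $P$, and by the (possibly large) Lipschitz constant $L$ entering $C$. Hence the size condition you need, $\sqrt{d/\lambda_{\min}(Q)} < r \le \lambda$, can simply fail, and when it fails your argument delivers only recurrence of a set that may be much larger than $\X_f$; recurrence of such a set yields no convergence to $\X_\infty$ at all, because the drift inequality says nothing once the state is inside it. This is a genuine missing idea under the stated hypotheses, not a technicality to be absorbed.

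The paper closes exactly this hole by a mechanism that is insensitive to the size of $d$: it never uses the expected-descent inequality for attractivity. Instead, (i) conditional on the shifted candidate solution remaining feasible, the closed loop admits the Chisci-type decomposition $x_k=\zeta_k+\xi_k$ with $\xi_k\in\X_\infty$ and $\zeta_k\to 0$ \emph{robustly} (Lemma~\ref{lem:asymptStabRobust}), and compactness of $C_{T,x}^\infty$ yields a uniform horizon $N_f$ after which $x_k\in\X_f$ from any feasible initial state (Corollary~\ref{cor:finiteTimeConvergence}); (ii) by the constraint tightening of Section~\ref{ssec:feasCandidateSol}, each step the candidate stays feasible with probability at least $1-\epsilon_f$, so any block of $N_f$ consecutive steps is entirely feasible with probability at least $(1-\epsilon_f)^{N_f}>0$, and such a block eventually occurs with probability one (Lemma~\ref{lem:feasOfConsecutiveTimes}); (iii) once the state reaches $\X_f$, candidate feasibility holds forever (Lemma~\ref{lem:invarianceTerminalRegion}) and the deterministic regime takes over, after which Borel--Cantelli and Fatou convert the block estimate into $\lim_{k'\to\infty}\PP\{\sup_{k>k'}\|x_k\|_{\X_\infty}<\varepsilon_2\}=1$. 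The robust ISS contraction in step (i) works for disturbances of arbitrary magnitude, which is precisely what your drift argument cannot provide. If you wish to keep a drift-style proof you would need an extra assumption relating the disturbance size to the terminal set; as the theorem stands, the supermartingale step should be replaced by the candidate-feasibility-run argument.
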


We prove the theorem by first proving it under the assumption that the candidate solution remains feasible at each time step. Then, we prove that it exists a set $\mathbb S$ where this feasibility assumption is verified and that for every probability $\rho \in (0,1]$ and state $x_0$ in $C_{T,x}^\infty$ it exists a time $N\in \N_{\ge 0}$ such that $\PP\{x_N \in \mathbb S\} \ge 1-\rho$.

The proof differs from standard proofs using a stochastic Lyapunov function because of the nonzero probability that the candidate solution does not remain feasible during a transient phase.

The following lemma is inspired by Theorem 8 in~\cite{Chisci2001_SystemsWithPersistentDisturbanceMPCwithrestrConstr}, where Robust MPC is considered.
\begin{lem}\label{lem:asymptStabRobust}
  Given the system~\eqref{eqn:xsystem} with $x_0 \in C_{T,x}^\infty$ and the proposed SMPC controller.
  If Assumption~\ref{ass:basicStabAssumptions} holds and the candidate solution $\mathbf{\tilde v}_{T|k}$ remains feasible for all $k > 0$, then the state $x_k = \zeta_k + \xi_k$ can be separated into a part $\zeta_k$ and a part $\xi_k$, such that the origin is asymptotically stable for $\zeta_k$ with region of attraction $C_{T,x}^{\infty}$ and $\xi_k \in \X_\infty ~\forall k\ge0$.
\end{lem}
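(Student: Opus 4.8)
The plan is to import the tube-MPC decomposition of Chisci et al. and reduce the lemma to a deterministic telescoping argument on the optimal value function. First I would pass to the prestabilized coordinates $c_{l|k} = v_{l|k} - K z_{l|k}$, so that the nominal dynamics read $z_{l+1|k} = A_{cl} z_{l|k} + B c_{l|k}$ and, since $z_{0|k}=x_k$, the applied control is $u_k = K x_k + c_{0|k}^*$. I then define the disturbance part by $\xi_{k+1} = A_{cl}\xi_k + B_w w_k$, $\xi_0 = 0$; because $\X_\infty$ is the minimal robust positively invariant set for $x_{k+1}=A_{cl}x_k+B_w w_k$ and $0\in\X_\infty$, the inclusion $A_{cl}\X_\infty\oplus B_w\W\subseteq\X_\infty$ gives by induction $\xi_k\in\X_\infty$ for all $k\ge 0$. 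Setting $\zeta_k = x_k - \xi_k$ and substituting $u_k=Kx_k+c_{0|k}^*$, a one-line computation yields $\zeta_{k+1} = A_{cl}\zeta_k + B c_{0|k}^*$, a Schur-stable nominal system driven only by the optimal first perturbation.

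The heart of the proof is then to show $c_{0|k}^*\to 0$. Here I would use Assumption~\ref{ass:basicStabAssumptions}: since $K$ is LQ-optimal, the solution $P$ of the Lyapunov equation $A_{cl}^\top P A_{cl}+Q+K^\top RK=P$ coincides with the Riccati solution, so the completion-of-squares identity $\|z\|_Q^2+\|Kz+c\|_R^2+\|A_{cl}z+Bc\|_P^2=\|z\|_P^2+\|c\|_{R+B^\top PB}^2$ holds with no cross term. Summing it over the horizon turns the cost into $J_T=\|x_k\|_P^2+\sum_{l=0}^{T-1}\|c_{l|k}\|_{R+B^\top PB}^2+c$, whence the value function splits as $V(x)=\|x\|_P^2+g(x)+c$ with $g(x)=\min_{\mathbf c}\sum_l\|c_l\|_{R+B^\top PB}^2\ge 0$. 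Crucially, written in $c$-coordinates the shifted candidate of~\eqref{eqn:candSol} is exactly $(c_{1|k}^*,\dots,c_{T-1|k}^*,0)$ — the disturbance feedback $KA_{cl}^iB_w w_k$ cancels identically against the shift of the nominal prediction — so its control energy equals $g(x_k)-\|c_{0|k}^*\|_{R+B^\top PB}^2$. Since the candidate is assumed feasible, $g(x_{k+1})\le g(x_k)-\|c_{0|k}^*\|_{R+B^\top PB}^2$. Telescoping together with $g\ge 0$ gives $\sum_{k\ge 0}\|c_{0|k}^*\|_{R+B^\top PB}^2\le g(x_0)<\infty$ for every $x_0\in C_{T,x}^\infty$; as $R+B^\top PB\succ 0$ this forces $c_{0|k}^*\to 0$, and since $A_{cl}$ is Schur, $\zeta_k\to 0$, establishing attractivity with region of attraction $C_{T,x}^\infty$.

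To upgrade attractivity to asymptotic stability I would add a local argument near $\X_\infty$. By Assumption~\ref{ass:basicStabAssumptions} there is a neighborhood $\X_\infty\oplus\lambda\mathbb B\subseteq\X_f$ on which the unconstrained LQ trajectory satisfies all tightened constraints of~\eqref{eqn:MPCOptProg}, i.e.\ $c_{0|k}^*=0$. There the $\zeta$-dynamics reduce to $\zeta_{k+1}=A_{cl}\zeta_k$, whose $P$-norm is strictly decreasing; combined with $\xi_k\in\X_\infty$ and $\xi_0=0$, an induction shows that for $\|\zeta_0\|$ small enough $x_k$ never leaves this neighborhood and $\|\zeta_k\|$ stays bounded by $\|\zeta_0\|$, which is exactly Lyapunov stability. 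Merging this with the global attractivity above yields asymptotic stability of the origin of $\zeta_k$ with region of attraction $C_{T,x}^\infty$.

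The main obstacle is that the optimization is solved at the disturbed state $x_k=\zeta_k+\xi_k$ rather than at the nominal $\zeta_k$, so $\zeta$ does \emph{not} follow a nominal-MPC feedback and the standard nominal-MPC Lyapunov decrease does not apply verbatim. The resolution is precisely the change to prestabilized coordinates: it routes the disturbance into the robustly invariant error $\xi_k$, so that in the candidate cost the disturbance contribution vanishes \emph{identically} rather than only in expectation, and the telescoping of $g$ closes deterministically. The remaining delicate point, which I would have to verify carefully, is that the LQ trajectory is genuinely feasible for the tightened sets $\Z_l$, $\V_l$, $\Z_f$ and the first-step constraint on the whole neighborhood $\X_\infty\oplus\lambda\mathbb B$ — this is what makes the local identity $c_{0|k}^*=0$, and hence Lyapunov stability, rigorous.
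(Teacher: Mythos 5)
Your proposal is correct and takes essentially the same route as the paper: the identical decomposition $\zeta_{k+1}=A_{cl}\zeta_k+Bc_k$, $\xi_{k+1}=A_{cl}\xi_k+B_w w_k$ with $\xi_k\in\X_\infty$ by robust invariance, convergence $c_{0|k}^*\to 0$ (which the paper imports by citing Theorem~8 of Chisci et al., and which you re-derive via the Riccati completion-of-squares and telescoping of the control energy $g$), ISS of the Schur-stable $\zeta$-system for attractivity, and the terminal-region identity $c_{0|k}^*=0$ (the paper's Lemma~\ref{lem:invarianceTerminalRegion}) for Lyapunov stability. The only difference is that you spell out in full the argument the paper delegates to the citation.
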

\begin{proof}
  Let 
  \begin{subequations}
    \begin{align}
      \zeta_{k+1} &=A_{cl}\zeta_k + B(u_k - K(\zeta_k + \xi_k)) \quad &&\zeta_0 = x_0 \label{eqn:zetaSys}\\
      \xi_{k+1} &= A_{cl} \xi_k + B_w w_k \quad &&\xi_0 = 0. \label{eqn:xiSys}
    \end{align}
  \end{subequations}
  Given recursive feasibility, in~\cite{Chisci2001_SystemsWithPersistentDisturbanceMPCwithrestrConstr} it has been shown that $c_k = u_k - K(\zeta_k + \xi_k)$ is bounded and $c_k \rightarrow 0$ for $k\rightarrow \infty$. Since $A_{cl}$ is Schur stable, the system~\eqref{eqn:zetaSys} is input to state stable (ISS) with respect to the input $c_k$ and hence $\zeta_k$ converges to the origin. Furthermore, for $x_k \in \X_f$ it holds that $c_k=0$, which together with Assumption~\ref{ass:basicStabAssumptions} implies asymptotic stability of the origin for system~\eqref{eqn:zetaSys}.
\end{proof}

\begin{cor} \label{cor:finiteTimeConvergence}
  If Assumption~\ref{ass:basicStabAssumptions} holds and the candidate solution $\mathbf{\tilde v}_{T|k}$ remains feasible for all $k > 0$, then there exists $N_\varepsilon$ such that $\|\zeta_k\|<\varepsilon$ for all $k \ge N_\varepsilon$. In particular, 
  it 
  exists $N_f \in \N_{\ge 0}$ such that $x_k \in \X_f$ for all $k\ge N_f$ and $x_0 \in C_{T,x}^\infty$.
\end{cor}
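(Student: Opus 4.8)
The plan is to read both claims directly off Lemma~\ref{lem:asymptStabRobust}, which already does the heavy lifting: it decomposes $x_k = \zeta_k + \xi_k$, establishes asymptotic stability of the origin for the $\zeta$-subsystem with region of attraction $C_{T,x}^\infty$, and guarantees the invariance $\xi_k \in \X_\infty$ for all $k$. The first assertion is then essentially the definition of convergence, and the second follows by aligning the convergence of $\zeta_k$ with the geometric buffer $\lambda$ supplied by Assumption~\ref{ass:basicStabAssumptions}.

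For the first claim, I would observe that $\zeta_0 = x_0 \in C_{T,x}^\infty$ places the initial condition of the $\zeta$-subsystem inside the region of attraction provided by Lemma~\ref{lem:asymptStabRobust}. Asymptotic stability of the origin then yields $\zeta_k \to 0$, which is precisely the statement that for every $\varepsilon > 0$ there exists an index $N_\varepsilon$ with $\|\zeta_k\| < \varepsilon$ for all $k \ge N_\varepsilon$.

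For the second claim, I would apply the first with the specific choice $\varepsilon = \lambda$, where $\lambda$ is the constant from Assumption~\ref{ass:basicStabAssumptions} satisfying $\X_\infty \oplus \lambda \mathbb B \subseteq \X_f$. Setting $N_f := N_\lambda$, for any $k \ge N_f$ we have $\zeta_k \in \lambda \mathbb B$ (since $\|\zeta_k\| < \lambda$ and $\mathbb B$ is the open unit ball) while $\xi_k \in \X_\infty$ by Lemma~\ref{lem:asymptStabRobust}, whence $x_k = \zeta_k + \xi_k \in \X_\infty \oplus \lambda \mathbb B \subseteq \X_f$.

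I do not expect a genuine obstacle here: the corollary is a packaging of Lemma~\ref{lem:asymptStabRobust} together with the buffer built into Assumption~\ref{ass:basicStabAssumptions}. The only point demanding a little care is the bookkeeping between the strict inequality $\|\zeta_k\| < \lambda$ and membership in the open ball $\lambda \mathbb B$, which is what makes the Minkowski-sum containment go through cleanly. One should also keep in mind that the entire statement is conditional on the standing hypothesis that the candidate solution remains feasible, so no probabilistic reasoning enters at this stage; the randomness is confined to $\xi_k$, whose only role is the deterministic containment $\xi_k \in \X_\infty$.
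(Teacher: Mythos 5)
There is a genuine gap: your $N_f$ depends on the initial condition, but the corollary asserts a \emph{uniform} $N_f$ valid for every $x_0 \in C_{T,x}^\infty$. Read the quantifiers in the statement: ``it exists $N_f \in \N_{\ge 0}$ such that $x_k \in \X_f$ for all $k\ge N_f$ \emph{and} $x_0 \in C_{T,x}^\infty$'' --- a single $N_f$ must work for the whole feasible set. Your argument applies the definition of convergence to one trajectory from one initial condition, so the index $N_\lambda$ you produce is a function of $x_0$ (and, since the $\zeta$-dynamics are driven by $c_k = u_k - K(\zeta_k+\xi_k)$, of the disturbance realization as well). This is not mere pedantry: the uniformity is exactly what the later results consume. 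Lemma~\ref{lem:feasOfConsecutiveTimes} and Lemma~\ref{lem:attractivity} partition time into intervals $I_i = [iN_f,(i+1)N_f-1]$ of a \emph{fixed} length $N_f$ and argue that feasibility of the candidate solution over any one such interval forces entry into $\X_f$, no matter which state in $C_{T,x}^\infty$ the closed loop happens to occupy at the start of that interval. With an initial-condition-dependent $N_f$ that argument collapses.

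The paper closes this gap with precisely the ingredient your proof never touches: the third bullet of Assumption~\ref{ass:basicStabAssumptions}, compactness of $C_{T,x}^\infty$. From asymptotic stability the origin is a \emph{uniform} attractor for the $\zeta$-system, so each $x_0 \in C_{T,x}^\infty$ has a neighborhood $\mathcal N_{x_0}$ and a time $N_{x_0}$ such that every trajectory started in $\mathcal N_{x_0}$ lies in $\lambda\mathbb B$ after $N_{x_0}$ steps; compactness then yields a finite subcover $\{\mathcal N_{x_0}\}_{x_0\in J}$, and $N_f = \max_{x_0\in J} N_{x_0}$ is the desired uniform bound. Your final step ($\zeta_k \in \lambda\mathbb B$, $\xi_k\in\X_\infty$, hence $x_k \in \X_\infty \oplus \lambda\mathbb B \subseteq \X_f$) is correct and matches the paper, but it must be preceded by this covering argument. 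The fact that an explicitly stated hypothesis (compactness) plays no role in your proof should itself have been a warning sign.
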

\begin{proof}
  From asymptotic stability, it follows that the origin is a uniform attractor for~\eqref{eqn:zetaSys} and hence
  for each $x_0 \in C_{T,x}^{\infty}$ exists a neighborhood $\mathcal N_{x_0}$ of $x_0$ and a $N_{x_0} \in \N_{>0}$ such that $\forall \zeta_{0} \in \mathcal N_{x_0}$ it holds $\zeta_k \in \lambda \mathbb B ~ \forall k>N_{x_0}$~\cite{Bhatia1967_DynamicalSystemsStabTheorieAndApplications}.
  The collection $\{\mathcal N_{x_0}\}_{x_0 \in C_{T,x}^{\infty}}$ is an open covering of the set $C_{T,x}^{\infty}$, hence by compactness of $C_{T,x}^{\infty}$ we can choose a finite subcollection $\{\mathcal N_{x_0}\}_{x_0 \in J}$ of $\{\mathcal N_{x_0}\}_{x_0 \in C_{T,x}^{\infty}}$ that also covers $C_{T,x}^{\infty}$.
  Letting $N_f = \max_{x_0 \in J} N_{x_0}$ it follows $\zeta_k \in \lambda \mathbb B ~ \forall k>N_f$ and all $x_0 \in C_{T,x}^{\infty}$. Using $\xi_k \in \X_\infty$, by Assumption~\ref{ass:basicStabAssumptions} this implies $x_k \in \X_f ~\forall k>N_f$ and $x_0 \in C_{T,x}^{\infty}$.
\end{proof}

It can be shown that the candidate solution remains feasible for all $k>k'$ if $x_{k'}$ is inside the terminal region. 
In this case, 
Lemma~\ref{lem:asymptStabRobust} holds and we only need to consider $x_k \notin \X_f$.
\begin{lem} \label{lem:invarianceTerminalRegion}
  The terminal region $\X_f$ is robust forward invariant for the closed-loop system under the proposed SMPC algorithm and 
  $( \mathbf{\tilde z}_{T+1|k+1}, \mathbf{\tilde v}_{T|k+1}) \in \tilde{\mathbb D}(x_{k+1})$ for all $k\ge k'$ if $x_{k'} \in \X_f$.
\end{lem}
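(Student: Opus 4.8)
The plan is to reduce both assertions — robust forward invariance of $\X_f$ and feasibility of the candidate solution — to a single fact: inside $\X_f$ the proposed controller applies the local law $u_k = Kx_k$, i.e.\ the all-$K$ (LQ-optimal) solution $v_{l|k} = KA_{cl}^l x_k$, $z_{l|k} = A_{cl}^l x_k$. Once this is established, invariance is immediate from the robust positive invariance of $\X_f$ under $u=Kx$, together with the fact that by Assumption~\ref{ass:disturbance} the zero-mean, convex-support set $\W$ contains the origin: for $x_k\in\X_f$ we get $x_{k+1}=A_{cl}x_k+B_w w_k\in\X_f$ for every $w_k\in\W$.

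First I would show that the all-$K$ solution is feasible for~\eqref{eqn:MPCOptProg} at every $x\in\X_f$. Since $0\in\W$ and $\X_f$ is robust positively invariant, the nominal trajectory $z_{l|k}=A_{cl}^l x$ stays in $\X_f$ for all $l\ge 0$. It then remains to verify that this trajectory meets the tightened constraints: $A_{cl}^l x\in\Z_l$ and $KA_{cl}^l x\in\V_l$ at the intermediate steps, $A_{cl}^T x\in\Z_f$ at the terminal step (using $\Z_f\subseteq\Z_T$), and the first-step condition $A_{cl}x\in C_{T,x}^\infty\ominus B_w\W$. The last one follows from robust invariance, since $A_{cl}x+B_w w\in\X_f\subseteq C_{T,x}^\infty$ for all $w\in\W$; and once the remaining inclusions hold, $\X_f\subseteq C_{T,x}$ combined with robust control invariance of $\X_f$ gives $\X_f\subseteq C_{T,x}^\infty$ by maximality, closing the loop. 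Because $K$ is the unconstrained LQ-optimal gain (Assumption~\ref{ass:basicStabAssumptions}) and $P$ solves the associated Lyapunov equation, this feasible all-$K$ solution is also the minimizer of $J_T$, so the SMPC law indeed returns $u_k=Kx_k$ on $\X_f$ (this is exactly the $c_k=0$ claim used in Lemma~\ref{lem:asymptStabRobust}).

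Next I would treat the candidate solution by direct substitution. Starting from the all-$K$ optimizer at time $k$, with $v_{l|k}=KA_{cl}^l x_k$ and $z_{T|k}=A_{cl}^T x_k$, evaluating~\eqref{eqn:candSol} yields, for $i=0,\dots,T-2$, $\tilde v_{i|k+1}=KA_{cl}^{i+1}x_k+KA_{cl}^i B_w w_k = KA_{cl}^i(A_{cl}x_k+B_w w_k)=KA_{cl}^i x_{k+1}$, and the same identity at $i=T-1$ (using $z_{T|k}=A_{cl}^T x_k$). Hence the shifted candidate coincides with the all-$K$ solution generated from $x_{k+1}$. Since $x_{k+1}\in\X_f$ by invariance, the first part shows this candidate is feasible, i.e.\ $(\mathbf{\tilde z}_{T+1|k+1},\mathbf{\tilde v}_{T|k+1})\in\tilde{\mathbb D}(x_{k+1})$. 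A straightforward induction on $k\ge k'$, restarting from $x_{k'}\in\X_f$, then delivers both invariance and candidate feasibility for all $k\ge k'$.

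The main obstacle is the feasibility verification in the second paragraph: reconciling the contraction of $A_{cl}^l x$ towards the origin with the fact that the tightened sets shrink in $l$ (one checks that $\eta_l$ is nonincreasing, so $\Z_T\subseteq\cdots\subseteq\Z_1$ and $\Z_f\subseteq\Z_T$). The inclusion $\X_f\subseteq\tilde\X_f$ only delivers the one-step bound $HA_{cl}x\le\eta_1$, which is weaker than the required $HA_{cl}^l x\le\eta_l$ and $H_f A_{cl}^T x\le\tilde\eta_f$. The crux is therefore to argue that the (maximal) robust positively invariant terminal set is constraint-admissible for the tightened nominal constraints, i.e.\ that $A_{cl}^l\X_f\subseteq\Z_l$ and $A_{cl}^T\X_f\subseteq\Z_f$; this is where the geometry of $\X_f$ relative to the tightening parameters must be exploited, and it is the step I expect to require the most care.
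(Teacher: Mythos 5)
Your proof plan has the same architecture as the paper's proof --- reduce everything to the claim that on $\X_f$ the SMPC optimizer coincides with the unconstrained LQ solution $v_{l|k}=Kz_{l|k}$, $z_{l|k}=A_{cl}^l x_k$, get invariance from robust positive invariance of $\X_f$ under $u=Kx$, and observe that the shifted candidate \eqref{eqn:candSol} is exactly the all-$K$ solution restarted at $x_{k+1}$ --- but the one step you defer, feasibility of the all-$K$ trajectory with respect to the tightened sets $\Z_l$, $\V_l$, $\Z_f$, is the entire mathematical content of the lemma, so as it stands the proposal has a genuine gap. Moreover, the route you sketch for closing it (prove the geometric inclusions $A_{cl}^l\X_f\subseteq\Z_l$ and $A_{cl}^T\X_f\subseteq\Z_f$ via monotonicity of $\eta_l$ and the ``geometry of $\X_f$'') is not the one that works: monotonicity of $l\mapsto\eta_l$ is neither used by the paper nor guaranteed in general (a $(1-\varepsilon)$-quantile need not shrink when an independent zero-mean increment is added, absent extra distributional structure), and nothing in the construction of $\X_f$ relates $A_{cl}^l\X_f$ to $\Z_l$ directly.

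The missing idea is probabilistic rather than geometric: apply robust invariance to the \emph{true} closed-loop state, not the nominal one, and then let the chance-constraint definition of the tightening do the work. Under $u=Kx$ started at $x_k\in\X_f$, the true predicted state $x_{l|k}=z_{l|k}+e_{l|k}$ lies in $\X_f\subseteq\tilde\X_f$ for \emph{every} realization of $e_{l|k}$, so $HA_{cl}(z_{l|k}+e_{l|k})\le\eta_1$ for all $e_{l|k}$, i.e. $Hz_{l+1|k}\le\eta_1-HA_{cl}e_{l|k}$ surely. Now write $e_{l+1|k}=A_{cl}e_{l|k}+B_wW_{k+l}$ with $W_{k+l}$ independent of $e_{l|k}$ and $B_wW_{k+l}$ distributed as $e_{1|k}$; since the event $\bigl\{[\eta_1]_j\le[h]_j-[H]_jB_wW_{k+l}\bigr\}$ is contained in the event $\bigl\{[H]_jz_{l+1|k}\le[h]_j-[H]_je_{l+1|k}\bigr\}$ (add $-[H]_jA_{cl}e_{l|k}$ to both sides and use the surely-valid bound above), it follows that
\begin{equation*}
  \PP_k\bigl\{[H]_jz_{l+1|k}\le[h]_j-[H]_je_{l+1|k}\bigr\}
  \ge\PP\bigl\{[\eta_1]_j\le[h]_j-[H]_jB_wW_{k+l}\bigr\}\ge1-[\varepsilon]_j,
\end{equation*}
so $[H]_jz_{l+1|k}$ is feasible for the maximization \eqref{eqn:offlineChanceConstrProgr} defining $[\eta_{l+1}]_j$, hence $Hz_{l+1|k}\le\eta_{l+1}$, i.e. $z_{l+1|k}\in\Z_{l+1}$. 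The input and terminal constraints are even easier: robust invariance gives $GKx_{l|k}\le g$ and $H_fx_{T|k}\le h_f$ for all realizations, so the chance constraints defining $\mu_l$ in \eqref{eqn:offlineChanceConstrProgrU} and $\eta_f$ in \eqref{eqn:offlineChanceConstrProgrTermConstr} are satisfied with probability one by $GKz_{l|k}$ and $H_fz_{T|k}$, and maximality yields $Kz_{l|k}\in\V_l$ and $z_{T|k}\in\Z_f$. With this in place, your remaining steps (optimality of the feasible all-$K$ solution via the LQ choice of $K$ and $P$, the inclusion $\X_f\subseteq C_{T,x}^\infty$, the substitution showing the candidate equals the all-$K$ solution at $x_{k+1}$, and the induction over $k\ge k'$) are correct and consistent with the paper's argument.
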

\begin{proof}
  The unconstrained optimal solution to~\eqref{eqn:MPCOptProg} equals the control inputs generated by the LQR, $v_{l|k}^{*} = Kz_{l|k}^{*}$.
  For $z_{0|k} \in \X_f$ robust forward invariance of the terminal region implies constraint satisfaction of the unconstrained optimal solution
  \begin{equation*}
    \begin{aligned}
      &~ HA_{cl} (z_{l|k} + e_{l|k}) \le \eta_1 \quad \forall e_{l|k} \\
      \Leftrightarrow &~ H z_{l+1|k} \le \eta_1 - H A_{cl} e_{l|k}  \quad \forall e_{l|k} \\
      \Rightarrow &~ H z_{l+1|k} \le \eta_{l+1}
    \end{aligned}
  \end{equation*}
  and similarly for the input and terminal constraints.
  Hence, in the terminal region, the proposed SMPC controller equals the unconstrained LQR. Since $\X_f$ is robust forward invariant under the unconstrained LQ optimal controller the statement follows.
\end{proof}

Under Assumption~\ref{ass:basicStabAssumptions}, Lemma~\ref{lem:asymptStabRobust} and~\ref{lem:invarianceTerminalRegion} suffice for stability of the proposed algorithm. Before proving attractivity, we need another lemma.

\begin{lem} \label{lem:feasOfConsecutiveTimes}
  Let $I_{k'} = [k',k'+N_f-1]$ denote some interval of length $N_f$ and $\mathcal A_{k'}$ the event that the candidate solution $\mathbf{\tilde v}_{T|k}$ is feasible $\forall k\in I_{k'}$.
For each $\rho \in (0,1]$ there exists $N_{\rho} \in \N_{\ge 0}$
    \begin{equation}
      \PP\left\{ \cup_{k'=0}^{N_\rho} \mathcal A_{k'} \right\} \ge 1-\rho.
      \label{eqn:probFeasOfConsecutiveTimes}
    \end{equation}
\end{lem}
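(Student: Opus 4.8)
The plan is to reduce the event $\mathcal A_{k'}$ to a statement about a fixed-length window of disturbances and then exploit independence across disjoint windows. First I would note that, by Proposition~\ref{prop:recFeas}, the first-step constraint guarantees $\tilde{\mathbb D}(x_k)\neq\emptyset$ for every $k$ and every disturbance realization, so a feasible (indeed optimal) solution is always available. Combining this with the Recursive Feasibility of the Candidate Solution proposition, the candidate $\mathbf{\tilde v}_{T|k}$ is feasible at time $k$ whenever $w_{k-1}\in\W_f$; this is a \emph{deterministic} sufficient condition depending only on the single disturbance $w_{k-1}$, not on the entire trajectory history. This decoupling is the crux of the argument.

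Building on this, I would introduce the event $B_{k'}=\{\,w_{k'-1},w_{k'},\dots,w_{k'+N_f-2}\in\W_f\,\}$, a run of $N_f$ consecutive disturbances in $\W_f$. The previous step gives $B_{k'}\subseteq\mathcal A_{k'}$, so $\PP\{\mathcal A_{k'}\}\ge\PP\{B_{k'}\}$. By Assumption~\ref{ass:disturbance} the $W_k$ are i.i.d.\ and $\PP\{w\in\W_f\}\ge1-\epsilon_f$, whence $\PP\{B_{k'}\}\ge(1-\epsilon_f)^{N_f}=:p$, with $p\in(0,1]$ because $\epsilon_f<1$.

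The decisive idea for the union bound is to retain only the disjoint windows $k'=1,\,N_f+1,\,2N_f+1,\dots$, so that $B_1,B_{N_f+1},B_{2N_f+1},\dots$ depend on mutually disjoint index sets and are therefore independent. For the first $M$ such blocks,
\begin{equation*}
  \PP\Big\{\bigcup_{j=0}^{M-1}B_{jN_f+1}\Big\}=1-\prod_{j=0}^{M-1}\big(1-\PP\{B_{jN_f+1}\}\big)\ge 1-(1-p)^M .
\end{equation*}
Since these events are among the $\mathcal A_{k'}$ with $k'\le (M-1)N_f+1$, monotonicity yields $\PP\{\cup_{k'=0}^{N_\rho}\mathcal A_{k'}\}\ge1-(1-p)^M$ for $N_\rho=(M-1)N_f+1$. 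As $p>0$, we have $(1-p)^M\to0$, so choosing $M$ large enough that $(1-p)^M\le\rho$ (e.g.\ $M\ge\lceil\log\rho/\log(1-p)\rceil$ if $p<1$, and $M=1$ if $p=1$) and setting $N_\rho=(M-1)N_f+1$ establishes~\eqref{eqn:probFeasOfConsecutiveTimes}.

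The main obstacle is the very first step: arguing that feasibility of the candidate throughout $I_{k'}$ is implied by a bounded, fixed-length window of i.i.d.\ disturbances lying in $\W_f$, \emph{independently} of the history-dependent optimal solutions along the way. The clean resolution is not to claim that $\mathcal A_{k'}$ itself is window-measurable, but only that the window event $B_{k'}\subseteq\mathcal A_{k'}$ is, which suffices for the lower bound; everything after that is a routine geometric-decay union bound over independent blocks.
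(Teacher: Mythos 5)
Your proof is correct and follows the same blueprint as the paper's: partition time into disjoint blocks of length $N_f$, lower-bound the per-block success probability by $(1-\varepsilon_f)^{N_f}$, and let the number of blocks grow so the failure probability decays geometrically. Where you genuinely improve on the paper is the independence step. The paper bounds $\PP\{\mathcal A_{iN_f}\}\ge(1-\varepsilon_f)^{N_f}=:1-\beta_f$ and then asserts $\PP\{\cup_i\mathcal A_{iN_f}\}\ge 1-\beta_f^{\lfloor N_\rho/N_f\rfloor+1}$, which tacitly multiplies complement probabilities as if the events $\mathcal A_{iN_f}$ were independent across blocks; that is not immediate, since feasibility of the candidate at time $k$ is inherited from the history-dependent optimal solution at time $k-1$. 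You resolve exactly this point: because Proposition~\ref{prop:recFeas} guarantees a feasible optimizer at every step, the candidate at time $k$ is feasible whenever $w_{k-1}\in\W_f$, so the window event $B_{k'}=\{w_{k'-1},\dots,w_{k'+N_f-2}\in\W_f\}$ is a deterministic subset of $\mathcal A_{k'}$ that is measurable with respect to a fixed block of i.i.d.\ disturbances. Independence across disjoint windows is then genuine, the product formula is exact for the $B$'s, and monotonicity transfers the bound to the union in~\eqref{eqn:probFeasOfConsecutiveTimes}. The result is the same quantitative conclusion ($N_\rho$ of order $N_f\log\rho/\log(1-p)$ with $p=(1-\varepsilon_f)^{N_f}$), obtained by an argument that closes the independence gap the paper's own proof glosses over.
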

Lemma~\ref{lem:feasOfConsecutiveTimes} states, that for fixed probability $1-\rho$ we can find a sufficiently long horizon such that, at some point within this horizon, the candidate solution stays feasible $N_f$ consecutive times and hence, by Corollary~\ref{cor:finiteTimeConvergence}, enters the terminal region.
\begin{proof}
  Let $1-\varepsilon_f$ denote the probability that the candidate solution stays feasible in the next sampling instant.
  The left hand side of~\eqref{eqn:probFeasOfConsecutiveTimes} can be crudely over-approximated by the probability of staying feasible during one of the time periods $I_i =  [iN_f,(i+1)N_f-1]$ for $i\in[0,\lfloor \frac{N_\rho}{N_f} \rfloor]$.
For each $I_i$ we have $\PP\{\mathcal A_{iN_f}\}\ge (1-\varepsilon_f)^{N_f} =: 1- \beta_f$. Hence $\PP\left\{ \cup_{k'=0}^{N_\rho} \mathcal A_{k'} \right\} \ge \PP\left\{ \cup_{i=0}^{\lfloor \frac{N_\rho}{N_f} \rfloor} \mathcal A_{iN_f} \right\} \ge 1-(\beta_f)^{\lfloor \frac{N_\rho}{N_f} \rfloor + 1}$.
Since $\beta_f \in [0,1)$, the right hand side of the inequality is increasing with $N_\rho$ and converges to $1$.
\end{proof}

\begin{lem}[Attractivity]\label{lem:attractivity}
  Under Assumption~\ref{ass:basicStabAssumptions}, 
  for all $\varepsilon_2 \in \R_{>0}$
  \begin{equation*}
    x_0 \in C_{T,x}^{\infty} \Rightarrow \lim_{k'\rightarrow \infty} \PP\{ \sup_{k>k'} \|x_k\|_{\X_\infty} < \varepsilon_2 \} = 1.
  \end{equation*}
\end{lem}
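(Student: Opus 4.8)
The plan is to reduce the probabilistic attractivity statement to two ingredients: a \emph{deterministic} convergence result that is valid once the trajectory has entered the terminal region $\X_f$, and a \emph{probabilistic} bound guaranteeing that $\X_f$ is reached in finite time with probability arbitrarily close to one. First I would observe that the events $\{\sup_{k>k'}\|x_k\|_{\X_\infty}<\varepsilon_2\}$ are nondecreasing in $k'$, so the limit in the claim exists; it therefore suffices to show that for every $\rho\in(0,1]$ there is a \emph{deterministic} time $k'_\star$ with $\PP\{\sup_{k>k'}\|x_k\|_{\X_\infty}<\varepsilon_2\}\ge 1-\rho$ for all $k'\ge k'_\star$. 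Throughout I would use the decomposition $x_k=\zeta_k+\xi_k$ of Lemma~\ref{lem:asymptStabRobust}: since $\xi_k$ solves~\eqref{eqn:xiSys} with $\xi_0=0$ it lies in $\X_\infty$ for every $k$, whence $\|x_k\|_{\X_\infty}\le\|\zeta_k\|$, reducing the distance bound to a bound on $\|\zeta_k\|$.

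For the probabilistic ingredient I would combine Lemma~\ref{lem:feasOfConsecutiveTimes}, Corollary~\ref{cor:finiteTimeConvergence} and Lemma~\ref{lem:invarianceTerminalRegion}. Fix $\rho$ and let $N_\rho$ be as in Lemma~\ref{lem:feasOfConsecutiveTimes}, so that with probability at least $1-\rho$ there is some $k^\star\le N_\rho$ at which the candidate solution is feasible for the $N_f$ consecutive instants $k\in[k^\star,k^\star+N_f-1]$. Since $x_{k^\star}\in C_{T,x}^\infty$ by the robust invariance in Proposition~\ref{prop:recFeas}, the uniform finite-time convergence of Corollary~\ref{cor:finiteTimeConvergence} yields $x_{k^\star+N_f}\in\X_f$, and Lemma~\ref{lem:invarianceTerminalRegion} then guarantees that the candidate solution stays feasible and $x_k\in\X_f$ for all $k\ge k^\star+N_f$. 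Hence, with probability at least $1-\rho$, one has $x_k\in\X_f$ for all $k\ge N_\rho+N_f$.

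For the deterministic ingredient I would use that inside $\X_f$ the controller coincides with the unconstrained LQR, so $c_k=0$ and $\zeta_{k+1}=A_{cl}\zeta_k$. By Schur stability there are constants $c\ge 1$, $\gamma\in(0,1)$ with $\|A_{cl}^{\,j}\|\le c\gamma^{\,j}$, and because $C_{T,x}^\infty$ and $\X_\infty$ are bounded the entry value $\zeta_{N_\rho+N_f}=x_{N_\rho+N_f}-\xi_{N_\rho+N_f}$ lies in a fixed bounded set, giving a \emph{deterministic} bound $\|\zeta_k\|\le c\gamma^{\,k-(N_\rho+N_f)}M$ for $k\ge N_\rho+N_f$ on the good event. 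Choosing $k'_\star\ge N_\rho+N_f$ large enough that $c\gamma^{\,k'_\star-(N_\rho+N_f)}M<\varepsilon_2$ yields $\sup_{k>k'}\|x_k\|_{\X_\infty}\le\sup_{k>k'}\|\zeta_k\|<\varepsilon_2$ for all $k'\ge k'_\star$ on that event, so $\PP\{\sup_{k>k'}\|x_k\|_{\X_\infty}<\varepsilon_2\}\ge 1-\rho$. Since $\rho$ is arbitrary and the events are monotone in $k'$, the limit exists and equals $1$.

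The main obstacle I anticipate is the interface between probabilistic reachability of $\X_f$ and deterministic decay: the threshold $k'_\star$ must be made \emph{deterministic} rather than depend on the random entry time into $\X_f$. This is handled by using the \emph{uniform} time $N_f$ from Corollary~\ref{cor:finiteTimeConvergence} together with the a priori bound $N_\rho$ from Lemma~\ref{lem:feasOfConsecutiveTimes}, which caps the entry time at $N_\rho+N_f$ on the good event, and by exploiting boundedness of $C_{T,x}^\infty$ and $\X_\infty$ to bound the entry value of $\zeta_k$ uniformly. A secondary point requiring care is that Corollary~\ref{cor:finiteTimeConvergence} is stated under feasibility for all $k>0$, whereas here it is applied over a single window $[k^\star,k^\star+N_f-1]$; this is legitimate precisely because Lemma~\ref{lem:invarianceTerminalRegion} makes feasibility self-sustaining once $\X_f$ is reached.
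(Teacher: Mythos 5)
Your proof is correct, and it rests on the same three pillars as the paper's: Lemma~\ref{lem:feasOfConsecutiveTimes} to obtain, with probability $1-\rho$, a window of $N_f$ consecutive feasible candidate solutions within the first $N_\rho$ steps; Corollary~\ref{cor:finiteTimeConvergence} for the uniform time $N_f$ to reach $\X_f$; and Lemma~\ref{lem:invarianceTerminalRegion} to make feasibility self-sustaining thereafter. Where you genuinely diverge is in the limiting argument. The paper defines bad events $\mathcal B_{k'}=\{\sup_{k>k'}\|x_{k+N}\|_{\X_\infty}\ge\varepsilon_2\}$ with $N=\max\{N_{\varepsilon_2},N_f\}$, bounds $\PP\{\mathcal B_{k'}\}\le 1-\PP\{\cup_{k=0}^{k'}\mathcal A_k\}\le\beta_f^{\lfloor k'/N_f\rfloor+1}$, and concludes via the Borel--Cantelli Lemma and Fatou's Lemma. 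You instead exploit monotonicity of the events $\{\sup_{k>k'}\|x_k\|_{\X_\infty}<\varepsilon_2\}$ in $k'$ and give a direct fixed-$\rho$ argument; moreover, where the paper reuses Corollary~\ref{cor:finiteTimeConvergence} (the constant $N_{\varepsilon_2}$) to get within $\varepsilon_2$ of $\X_\infty$ after entering $\X_f$, you use the explicit decay $\zeta_{k+1}=A_{cl}\zeta_k$ valid inside $\X_f$ (where the SMPC law equals the LQR, by the proof of Lemma~\ref{lem:invarianceTerminalRegion}), with the entry value bounded deterministically via compactness of $C_{T,x}^\infty$ and boundedness of $\X_\infty$. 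Your route is more elementary---indeed the paper's Borel--Cantelli/Fatou step is stronger than needed, since its own geometric bound on $\PP\{\mathcal B_{k'}\}$ already tends to zero---and it sidesteps a small indexing glitch in the paper (the argument there actually requires $N\ge N_f+N_{\varepsilon_2}$ rather than the maximum). What the paper's formulation buys in exchange is the summability $\sum_k\PP\{\mathcal B_k\}<\infty$, which is reused verbatim in the performance corollary following Theorem~\ref{thm:asymptStability}; your fixed-$\rho$ bounds contain the same geometric information but do not make it explicit.

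One point to tighten: like the paper, you apply Corollary~\ref{cor:finiteTimeConvergence} over a finite window although it is stated under feasibility for all $k>0$, but your stated justification (Lemma~\ref{lem:invarianceTerminalRegion} makes feasibility self-sustaining once $\X_f$ is reached) only covers what happens \emph{after} $\X_f$ is reached, not why $N_f$ window-steps suffice to reach it. The clean argument is causality plus extension: $x_{k^\star+N_f}$ depends only on disturbances realized before the window ends, and any realization that is feasible on the window can be extended to an always-feasible one by choosing all later disturbances in $\W_f$, which preserve candidate feasibility by the proposition in Section~\ref{ssec:feasCandidateSol}; applying Corollary~\ref{cor:finiteTimeConvergence} to the extended realization then transfers the conclusion $x_{k^\star+N_f}\in\X_f$ to the original one. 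Since the paper makes the same step without comment, this is a refinement rather than a gap in your proposal.
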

By Corollary~\ref{cor:finiteTimeConvergence} and Lemma~\ref{lem:invarianceTerminalRegion} the closed-loop system converges if the candidate solution remains feasible for $N_f$ consecutive time-steps. By Lemma~\ref{lem:feasOfConsecutiveTimes}, for any given probability $\rho$, there exists a sufficiently long horizon $N_\rho$ such that this holds with probability $\rho$. We use the Borel-Cantelli Lemma and Fatou's Lemma~\cite{klenke2014_book_ProbabilityTheory} to show that in fact the probability grows \emph{fast enough}.\\
\begin{proof}
  Let $N=\max\{N_{\varepsilon_2}, N_f\}$ with $N_{\varepsilon_2},~N_f$ according to Corollary~\ref{cor:finiteTimeConvergence} and define the event ${\mathcal B_{k'} = \left\{ \sup_{k>k'} \|x_{k+N}\|_{\X_\infty} \ge \varepsilon_2 \right\}}$.
  By Corollary~\ref{cor:finiteTimeConvergence}, Lemma~\ref{lem:invarianceTerminalRegion} 
  it holds $\PP\{\mathcal B_{k'}\} \le 1 - \PP\left\{ \cup_{k=0}^{k'} \mathcal A_{k} \right\} $.
  Inserting the explicit bound derived in the proof of Lemma~\ref{lem:feasOfConsecutiveTimes} leads to
  \begin{equation*}
    \begin{aligned}
      \sum_{k=0}^{\infty} \PP\{\mathcal B_k\} 
      &\le \sum_{k=0}^{\infty} \beta_f^{ \frac{k}{N_f} } 
      < \infty.
    \end{aligned}
  \end{equation*}
  By the Borel-Cantelli Lemma we have that $\PP\left\{ \lim_{k\rightarrow \infty}\sup \mathcal B_k \right\} = 0$ and hence by Fatou's Lemma $\lim_{k\rightarrow \infty}\sup \PP\left\{ \mathcal B_k \right\} = 0$ which concludes the proof.
\end{proof}

\begin{proof}[Proof (Theorem~\ref{thm:asymptStability})]
  Stability follows from the robust case together with robust recursive feasibility in the terminal region (Lemma~\ref{lem:invarianceTerminalRegion}). Attractivity follows from Lemma~\ref{lem:attractivity}.
\end{proof}

A direct corollary of Theorem~\ref{thm:asymptStability} is a tighter bound on the asymptotic average performance.
\begin{cor}
  Under Assumption~\ref{ass:basicStabAssumptions} it holds
  \begin{equation*}
    \lim_{t\rightarrow \infty} \frac{1}{t} \sum_{k=0}^t \PE\left\{ \| x_k \|_Q^2 \right\} \le \PE\left\{ \| B_w w\|_P^2 \right\}.
  \end{equation*}
\end{cor}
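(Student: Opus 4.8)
The plan is to recycle the stochastic Lyapunov descent established in the proof of Theorem~\ref{thm:constrSatisAndAvgcost}, but to exploit the stability result to argue that the extra penalty term $\epsilon_f C$ is only incurred during a transient phase whose contribution to the time-average vanishes. The starting point is the one-step inequality from that proof,
\begin{equation*}
  \PE\{V(x_{k+1})\mid x_k\} - V(x_k) \le -\|x_k\|_Q^2 + (1-\epsilon_f)\PE\{\|B_w w_k\|_P^2\} + \epsilon_f C.
\end{equation*}
The key refinement is that whenever $x_k \in \X_f$, Lemma~\ref{lem:invarianceTerminalRegion} guarantees the candidate solution is feasible with probability one, so for such states the penalized branch never occurs and the descent sharpens to $\PE\{V(x_{k+1})\mid x_k\} - V(x_k) \le -\|x_k\|_Q^2 + \PE\{\|B_w w_k\|_P^2\}$. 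Combining the two cases (and bounding $(1-\epsilon_f)\PE\{\|B_w w_k\|_P^2\}\le \PE\{\|B_w w_k\|_P^2\}$ off the terminal set) yields, for every $x_k$,
\begin{equation*}
  \PE\{V(x_{k+1})\mid x_k\} - V(x_k) \le -\|x_k\|_Q^2 + \PE\{\|B_w w_k\|_P^2\} + \epsilon_f C\, \mathbf 1_{\{x_k \notin \X_f\}}.
\end{equation*}

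Next I would take total expectations, sum from $k=0$ to $t$, and rearrange to obtain
\begin{equation*}
  \sum_{k=0}^{t}\PE\{\|x_k\|_Q^2\} \le \PE\{V(x_0)\} - \PE\{V(x_{t+1})\} + (t+1)\PE\{\|B_w w\|_P^2\} + \epsilon_f C\sum_{k=0}^{t}\PP\{x_k \notin \X_f\}.
\end{equation*}
Dividing by $t$ and letting $t\to\infty$, the boundary term $(\PE\{V(x_0)\}-\PE\{V(x_{t+1})\})/t$ vanishes because $V$ is continuous on the compact set $C_{T,x}^\infty$ (Assumption~\ref{ass:basicStabAssumptions}) and the closed loop remains in $C_{T,x}^\infty$ by Proposition~\ref{prop:recFeas}, so $V$ is uniformly bounded along trajectories. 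It then remains to show that the averaged penalty $\frac{\epsilon_f C}{t}\sum_{k=0}^{t}\PP\{x_k\notin\X_f\}$ tends to zero.

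For this final step I would use that $\X_f$ is robust forward invariant (Lemma~\ref{lem:invarianceTerminalRegion}): the hitting time $\tau=\inf\{k: x_k\in\X_f\}$ satisfies $\{x_k\notin\X_f\}=\{\tau>k\}$, so $\PP\{x_k\notin\X_f\}$ is non-increasing in $k$. Choosing $\varepsilon_2=\lambda$ in the attractivity statement of Lemma~\ref{lem:attractivity} and using $\X_\infty\oplus\lambda\mathbb B\subseteq\X_f$ from Assumption~\ref{ass:basicStabAssumptions} gives $\PP\{\tau<\infty\}=1$, hence $\PP\{x_k\notin\X_f\}\to 0$. Since the Ces\`aro mean of a sequence converging to zero also converges to zero, the penalty term disappears and the claimed bound $\lim_{t\to\infty}\frac1t\sum_{k=0}^{t}\PE\{\|x_k\|_Q^2\}\le\PE\{\|B_w w\|_P^2\}$ follows. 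I expect the main obstacle to be precisely this last step: one must combine the invariance of $\X_f$ (to turn ``reaching $\X_f$'' into a once-and-for-all event) with the probabilistic attractivity result from Theorem~\ref{thm:asymptStability} to conclude that the transient spent outside the terminal set is, in the averaged sense, negligible; everything else is the telescoping argument already used for Theorem~\ref{thm:constrSatisAndAvgcost}.
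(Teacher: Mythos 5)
Your proof is correct, and it shares the paper's overall skeleton: reuse the descent inequality from the proof of Theorem~\ref{thm:constrSatisAndAvgcost} with a time-varying penalty, telescope, divide by $t$, and show the averaged penalty vanishes (the boundary term $(V(x_0)-\PE\{V(x_{t+1})\})/t$ is handled the same way in both). The difference lies in how the penalty is parametrized and then killed. The paper writes the penalty as $\rho_f(k)\,C$, with $\rho_f(k)$ the unconditional probability that the candidate solution at time $k$ is infeasible, and proves the \emph{stronger} fact $\sum_{k}\rho_f(k)<\infty$ by bounding $\rho_f(k)\le\varepsilon_f\PP\{\mathcal B_{k-N-1}\}$ and invoking the geometric bound $\sum_k\PP\{\mathcal B_k\}\le\sum_k\beta_f^{k/N_f}<\infty$ established \emph{inside} the proof of Lemma~\ref{lem:attractivity}; summability then makes the averaged penalty vanish (indeed at rate $O(1/t)$). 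You instead localize the penalty via Lemma~\ref{lem:invarianceTerminalRegion} to the event $\{x_k\notin\X_f\}$, use robust forward invariance of $\X_f$ to turn these into nested events $\{\tau>k\}$, deduce $\PP\{\tau<\infty\}=1$ from the \emph{statement} of Lemma~\ref{lem:attractivity} with $\varepsilon_2=\lambda$ together with $\X_\infty\oplus\lambda\mathbb B\subseteq\X_f$ from Assumption~\ref{ass:basicStabAssumptions}, and close with the Ces\`aro argument. Your route is more self-contained at the lemma level (it never reaches into the internals of the attractivity proof) and slightly more elementary, at the price of losing quantitative information: you obtain only $\PP\{x_k\notin\X_f\}\to 0$ and hence $o(1)$ decay of the averaged penalty, whereas the paper's summability bound quantifies the transient. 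Both establish the corollary; the mechanisms for the crucial vanishing-penalty step are genuinely different but rest on the same underlying attractivity machinery.
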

\begin{proof}
  Let $\rho_f(k)$ be the probability that the candidate solution $\mathbf{\tilde v}_{T|k}$ is infeasible
  and $\Delta_V(k') = \PE\left\{ V(x_{k'}) \right\} - V(x_0)$.
  Analogous to the proof of Theorem~\ref{thm:constrSatisAndAvgcost} it holds
  \begin{equation*}
    \begin{aligned}
      &\Delta_V(k') \le \sum_{k=0}^{k'} \PE\{-\|x_k\|_Q^2 + \|B_w w_k\|_P^2\} + \rho_f(k)C \\
      &\frac{1}{k'}\sum_{k=0}^{k'} \PE\{\|x_k\|_Q^2 \} + \frac{\Delta_V(k') }{k'} \le \PE\{\|B_w w\|_P^2\} + \frac{C}{k'}\sum_{k=0}^{k'} \rho_f(k).
    \end{aligned}
  \end{equation*}
  Since $\sum_{k=0}^{k'} \rho_f(k) \le \sum_{k=0}^{k'} \varepsilon_f \PP\{B_{k-N-1}\} < \infty$ the result follows by letting $k'\rightarrow \infty$.
\end{proof}

\subsection{Discussion: Offline Relaxation of Chance Constraints}
In this section, we briefly discuss joint vs single chance constraints of the form
\begin{equation*}
  \begin{aligned}
    \PP\{ H x &\le h \} \ge 1-\varepsilon, \\
  \PP\{ [H]_j x &\le [h]_j \} \ge 1-[\varepsilon]_j,
  \end{aligned}
\end{equation*}
respectively, and contrast the approaches taken in~\cite{Zhang2013_SochasticMPC} and~\cite{Cannon2011_StochasticTubesinMPC} with our approach.

It is a known result in Robust Tube MPC, that the tightened constraint $\Z$ involves at most as many linear inequalities as the original constraint $\X$~\cite{Blanchini2015_SetTheoreticMethodsInControl}. For a joint chance constraint, an analogue result does not hold. A tightened constraint on $z$ can in general not be expressed by a finite number of linear constraints and in the cases in which this can be done (e.g., polytopic $\W$ with uniform distribution), the tightened constraint involves in general more linear inequalities than the original one, see Fig.~\ref{fig:jointCC} for an example.
\begin{figure}[htpb]
  \begin{center}
    \includegraphics{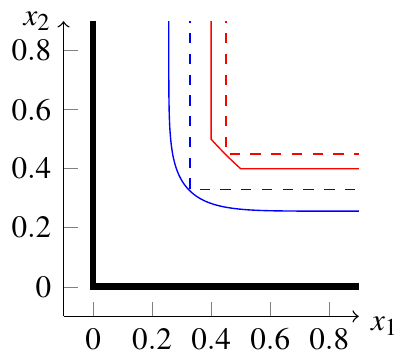}
  \end{center}
  \caption{\small Constraint $x_{1,2}\ge 0$ and tightened constraints for $z$ with $x=z+e$ and $e_{1,2}$ normal $e_i \sim \mathcal{N}(0,0.2^2)$ or uniform $e_i \sim \mathcal{U}(-0.5,0.5)$ distributed and violation probability $\varepsilon=0.1$. Dashed lines show tightened constraints for single chance constraints with $\varepsilon_j=\varepsilon/2$.}
  \label{fig:jointCC}
\end{figure}

Using single chance constraints to approximate joint chance constraints does not lead to an increased number of tightened constraints, but is in general either more conservative or increases the probability of constraint violation at certain points in state space.

We remark that the often proposed procedure of determining a confidence region for the uncertainty $w_k$ or the uncertain state $e_l$ and then requiring the constraints to hold for all realizations within these sets, leads to the same result. This does not approximate the true joint chance constraint but tighter versions of the original constraints, as well. Furthermore, choosing a parametrization for these sets increases the conservatism unless it fits the underlying distribution perfectly, e.g., spheroids for normal distributed uncertainties. In contrast, the approach taken here is tight for arbitrary distributions and constraints given by~\eqref{eqn:origConstraints}.

With the following illustrative example, we show the advantage of tightening each constraint according to a predefined probability of violating it, instead of jointly optimizing the tightening of all constraints with a given overall violation probability or similarly jointly determining the parameters for the confidence region of the uncertainty.
In particular, we show that the latter approach might lead to undesired, conservative results.
\begin{ex}\label{ex:jointCCParamDet}
  Consider the one dimensional case $x=z+e$ with $e$ having a non symmetric pdf, e.g., $e+0.2\sim\text{Gamma}(2,0.1)$ and constraint $\PP\left\{ |x| \le 1 \right\} \ge 1-\varepsilon$. 
Optimizing the constraint tightening jointly, i.e.
\begin{equation*}
  \begin{aligned}
    \max_{\eta}&~ \eta_1 + \eta_2 \\
    \text{s.t.} &~ \PP\left\{ \eta_1 \le 1-e,~ \eta_2 \le 1+e \right\} \ge 1-\varepsilon
  \end{aligned}
\end{equation*}
to derive the nominal constraints $-\eta_2 \le z \le \eta_1$
or optimizing the bounds of a confidence interval $[\gamma_1, \gamma_2]$ for $e$
\begin{equation*}
  \begin{aligned}
    \max_{\gamma}&~ \gamma_2 - \gamma_1 \\
    \text{s.t.} &~ \PP\left\{ \gamma_1 \le e \le \gamma_2 \right\} \ge 1-\varepsilon
  \end{aligned}
\end{equation*}
to derive the constraint $-1-\gamma_1 \le z \le 1-\gamma_2$
leads to a biased outcome.
As illustrated in Fig.~\ref{fig:constrExample}, 
with $\varepsilon=0.05$, the result is
$\eta_1 \approx 0.73$, $\eta_2 = 0.8$, respectively
$\gamma_1=-0.2$, $\gamma_2\approx 0.27$.
The constraint $x\ge0$ holds with zero probability of violation and $x\le0$ with $\varepsilon$ probability of violation.
If we maximize $x$, the result of the deterministic problem with tightened constraints will be equal to the solution of the original chance constrained problem. If we minimize $x$, the result will coincide with the solution of the robust problem.
\end{ex}
\begin{figure}[htpb]
  \centering
    \includegraphics{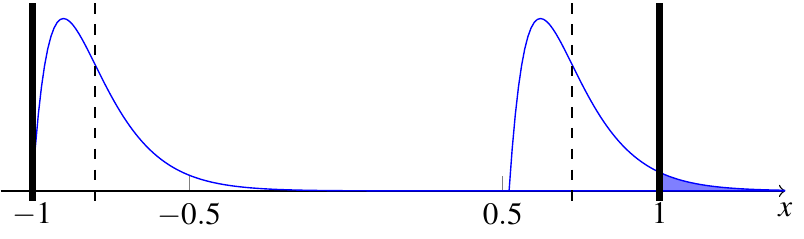}
  \caption{\small Constraints $|x| \le 1$ (solid line), resulting tightened constraints for the nominal state $z$ (dashed line) and probability density functions for $x$ with $z=-0.8$ and $z=0.73$, respectively (blue). A joint chance constraint evaluation as described in Example~\ref{ex:jointCCParamDet} leads to a biased outcome: The lower bound is satisfied with probability 1, whereas the upper bound is satisfied with probability $1-\varepsilon$.}
\label{fig:constrExample}
\end{figure}

The example shows that jointly optimizing multiple parameters to determine offline a minimal size confidence region might lead to overly conservative results equal to the deterministic robust program.
Finally, consider the case where the probability density function of $e$ is single valued over some region, e.g. uniform disturbance model. In this case both optimizations in Example~\ref{ex:jointCCParamDet} might not have a unique optimizer, a standard problem in determining a confidence region. Hence the conservativeness of the resulting deterministic problem depends on the chosen optimizer and initial value.

We conclude this section emphasizing that direct tightening of single chance constraints gives the best worst-case value in terms of conservativeness of approximating a joint chance constraint by means of offline probability calculation.

\begin{rem}
  If the constraint tightening can be given as a function of the violation probability $\varepsilon_j$, then by using Boole's Inequality and dynamic risk allocation~\cite{Ono2008_IterativeRiskAllocationMPCwithJointCC} the conservatism can be further reduced at the cost of higher online computations.
\end{rem}

\section{Implementation and Numerical Example} \label{sec:NumExample}
In this section, we briefly review practical considerations for solving the single chance constrained programs to determine the proposed constraint tightening. Thereafter, the non-conservativeness of the approach with respect to the allowed probability of constraint violation and the increased feasible region is demonstrated in a numerical example.

\subsection{Solving the Single Chance Constrained Programs} \label{ssec:SolvingSingleCC}
There is a vast literature on how to (approximately) solve optimization programs involving single chance constraints,\cite{prekopa2010_StochasticProgramming}. In the following, we briefly state deterministic, as well as sampling based solutions to efficiently solve the offline problems~\eqref{eqn:offlineChanceConstrProgr},~\eqref{eqn:offlineChanceConstrProgrU} and~\eqref{eqn:offlineChanceConstrProgrTermConstr}.
\subsubsection{Deterministic}
Chance constraints are constraints on multivariate integrals. In particular, if the random variable $W_k$ has a known probability density function $f_W(w)$, we can write~\eqref{eqn:offlineChanceConstrProgr} as
\begin{equation*}
  \begin{aligned}
      &{[\eta_{l}]_j} = \max_{\eta} \eta\\
      &\text{s.t. } \int_{\W^l} \mathbf{1}_{\left\{ \eta \le [h]_j - \left[ H \right]_j  e_{l|k}  \right\}} \prod_{i=0}^{l-1}f_W(w_i) \dd w_0 \cdots \dd w_{l-1} \ge 1-[\varepsilon]_j
  \end{aligned}
\end{equation*}
with $e_{l|k} = \sum_{i=0}^{l-1}A_{cl}^iB_w w_i$ and $\mathbf{1}_{\{\cdot\}}$ being the indicator function.
The multivariate integral can numerically be approximated by quadrature rules suitable for high-dimensional integrals like Quasi-Monte Carlo or Sparse Grid methods~\cite{Tempo2012_RandAlgForAnalysisAndDesign}. 
While this formulation allows for a direct solution with standard nonlinear or stochastic optimization solvers, it can be further simplified to 1-dimensional integrals if $f_W(w) = \prod_{s=1}^{m_w}f_{W_s}(w_{s})$, i.e. the individual random variables in the random vector $W_k$ are independent.
Let $\tilde h^{i} = [H]_j A_{cl}^{i} B_w$ and 
\begin{equation*}
  \begin{aligned}
    f_{\tilde h^{i}_{s}W_s}(w_s) &= \frac{1}{|[\tilde h^{i}]_{s}|}f_{W_s}\left(\frac{w_s}{[\tilde h^{i}]_{s}}\right),\\
  f_{\tilde h^{i}} &= f_{\tilde h^{i}_{1}W_s} \ast f_{\tilde h^{i}_{2}W_s} \ast \ldots \ast f_{\tilde h^{i}_{m_w}W_s},
  \end{aligned}
\end{equation*}
where $f\ast g$ denotes the convolution of $f$ and $g$.
The probability density function $f_{H_je_l}$ of $[H]_j e_{l|k}$ is then given by
\begin{equation*}
  f_{H_je_l} = f_{\tilde h^{0}} \ast f_{\tilde h^{1}} \ast \ldots \ast f_{\tilde h^{l-1}}
\end{equation*}
and 
\begin{equation*}
  \begin{aligned}
      &-{[\eta_{l}]_j} = \min_{\eta} \eta\\
      &\text{s.t. } \int_{-\infty}^{\eta} f_{H_je_l}(x+[h]_j) \dd x \ge 1-[\varepsilon]_j.
  \end{aligned}
\end{equation*}
This formulation involves only 1-dimensional integrals which can be easily evaluated numerically. Due to the multiple convolutions, it might be beneficial to work with the Fourier Transform of $f_{W_s}$ instead.

For further discussions on approximations and tailored numerical optimization schemes, see e.g.~\cite[Chapter 8]{prekopa2010_StochasticProgramming}.

\subsubsection{Sampling}
Recently, sampling techniques to solve robust and chance constrained problems have gained increased interest~\cite{Tempo2012_RandAlgForAnalysisAndDesign,Calafiore2011_ProbMethodsFrContrSysDesign}. They are independent of the underlying distribution, easy to implement and specific guarantees about their solution can be given. In particular, they allow to directly use complicated simulations or measurements of the error, instead of determining a probability density function. 

The chance constrained problems~\eqref{eqn:offlineChanceConstrProgr},~\eqref{eqn:offlineChanceConstrProgrU},~\eqref{eqn:offlineChanceConstrProgrTermConstr} can be efficiently solved to the desired accuracy by drawing a sufficiently large number $N_s$ of samples $w^{(i)}$ from $W$ and require the constraint to hold for all, but a fixed number $r$ of samples.
In~\cite{Campi2011_SampleAndDiscardApprTpCCOpt} the authors give, under the condition $\varepsilon_u N_s > k$, the explicit conditions
\begin{equation}
  \begin{aligned}
    r &\le\epsilon_u N_s-\sqrt{2\epsilon_u N_s \ln\frac{1}{\beta}}, \\
    r &\ge\epsilon_l N_s-1+\sqrt{3\epsilon_l N_s \ln\frac{2}{\beta}}
  \end{aligned}
  \label{eqn:CampiBound}
\end{equation}
to select $r$ and $N_s$ such that with confidence $1-\beta$ the solution to the sampled program is equal to the chance constrained programs~\eqref{eqn:offlineChanceConstrProgr},~\eqref{eqn:offlineChanceConstrProgrU} and~\eqref{eqn:offlineChanceConstrProgrTermConstr} with $\epsilon \in [\epsilon_l, \epsilon_u]$.

In general, one has to solve a mixed integer problem or use heuristics to discard samples in a (sub)optimal way. Here, due to the simple structure, a sort algorithm is used to solve the sampled approximation of~\eqref{eqn:offlineChanceConstrProgr},~\eqref{eqn:offlineChanceConstrProgrU},~\eqref{eqn:offlineChanceConstrProgrTermConstr} exactly.
\begin{prop}~\label{prop:etaSampling}
  Let $N_s$ and $r$ be chosen according to~\eqref{eqn:CampiBound}.
  Let $q_{1-r/N_s}$ be the $(1-r/N_s)$-quantile of the set $\left\{ [H]_j e_{l|k}^{(i)}\right\}_{i=1,\ldots,N_s}$ with ${e_{l|k}^{(i)} = \sum_{j=1}^l A_{cl}^{j-1}B_w w_j^{(i)}}$ independently chosen samples from $W^l$.
  Then with confidence $1-\beta$
  \begin{equation*}
    {[\eta_l]_j} = [h]_j - q_{1-r/N_s}
  \end{equation*}
  solves~\eqref{eqn:offlineChanceConstrProgr} with $\epsilon \in [\epsilon_l, \epsilon_u]$.
\end{prop}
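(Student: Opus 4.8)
The plan is to reduce the multivariate chance constraint in \eqref{eqn:offlineChanceConstrProgr} to a one-dimensional quantile problem, recognize the sort-based rule as the \emph{exact} optimizer of a sample-and-discard program, and then inherit the confidence statement directly from the results of \cite{Campi2011_SampleAndDiscardApprTpCCOpt}. First I would fix $j$ and $l$ and introduce the scalar random variable $Y = [H]_j e_{l|k}$, whose law depends only on the disturbance process through $e_{l|k} = \sum_{i=1}^{l} A_{cl}^{i-1} B_w W_i$ and neither on $x_k$ nor on the decision variables. Exactly as in the proof of Proposition~\ref{prop:constrSatisf}, the constraint $\PP_k\{\eta \le [h]_j - [H]_j e_{l|k}\} \ge 1-[\varepsilon]_j$ is equivalent to $\PP\{Y \le [h]_j - \eta\} \ge 1-[\varepsilon]_j$, i.e. $[h]_j-\eta$ must not fall below the $(1-[\varepsilon]_j)$-quantile of $Y$. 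Hence the exact maximizer of \eqref{eqn:offlineChanceConstrProgr} is $[\eta_l]_j = [h]_j - q^\star$, with $q^\star$ the true $(1-[\varepsilon]_j)$-quantile of $Y$, and the sampling scheme merely replaces $q^\star$ by an empirical quantile.

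Next I would give the closed-form solution of the sampled, sample-and-discard program. Drawing $N_s$ i.i.d. samples $Y^{(i)} = [H]_j e_{l|k}^{(i)}$ and requiring $\eta \le [h]_j - Y^{(i)}$ for all but $r$ of them, the objective $\max \eta$ is monotone, so the binding retained sample is always the largest retained $Y^{(i)}$. An elementary exchange argument then shows it is optimal to discard precisely the $r$ largest samples, after which the active constraint is the $(r+1)$-th largest sample. Therefore the optimal value equals $[h]_j$ minus the $(r+1)$-th largest of $\{Y^{(i)}\}_{i=1}^{N_s}$, which is exactly the empirical $(1-r/N_s)$-quantile $q_{1-r/N_s}$ (the $(N_s-r)$-th order statistic). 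This establishes that the stated sort-based formula $[\eta_l]_j = [h]_j - q_{1-r/N_s}$ is the exact optimizer of the discard program, so no heuristic sample removal is needed.

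With this identification in hand, the probabilistic guarantee is inherited directly from \cite{Campi2011_SampleAndDiscardApprTpCCOpt}: the problem has a single scalar decision variable, so under $\varepsilon_u N_s > k$ and with $r$, $N_s$ satisfying \eqref{eqn:CampiBound}, the violation probability of the discard solution lies, with confidence at least $1-\beta$, in the interval $[\epsilon_l, \epsilon_u]$. Combining this with the equivalence established in the first step yields that $[\eta_l]_j = [h]_j - q_{1-r/N_s}$ solves \eqref{eqn:offlineChanceConstrProgr} for some $\epsilon \in [\epsilon_l,\epsilon_u]$ with confidence $1-\beta$.

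I expect the main obstacle to be the bookkeeping linking the discarded-sample count to the quantile index, together with checking the hypotheses of the cited theorem. Concretely, I must verify that greedy removal of the largest samples is globally optimal (the exchange argument is routine but requires care with ties), and that the scalar program meets the nondegeneracy and fully-supported conditions under which the bounds of \eqref{eqn:CampiBound} apply, with the lone constraint playing the role of the problem's support dimension. The delicate point is the off-by-one alignment between the $r$ discarded constraints counted by the Campi--Garatti bound and the $(1-r/N_s)$-quantile, i.e. the $(r+1)$-th order statistic, together with the treatment of ties in the empirical distribution; getting this indexing exactly right is precisely what makes $[\epsilon_l,\epsilon_u]$ the correct confidence interval.
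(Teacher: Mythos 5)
Your proof is correct and matches the argument the paper intends: the paper itself states Proposition~\ref{prop:etaSampling} without a formal proof, relying on the surrounding remark that ``due to the simple structure, a sort algorithm solves the sampled approximation exactly'' together with the bounds~\eqref{eqn:CampiBound} from \cite{Campi2011_SampleAndDiscardApprTpCCOpt}, which is precisely your reduction to a scalar quantile, the exchange argument showing that discarding the $r$ largest samples is the exact optimal discard, and the inheritance of the $[\epsilon_l,\epsilon_u]$ confidence statement for a one-dimensional decision variable. Your flagged caveats (ties in the order statistics and the nondegeneracy hypotheses of the sample-and-discard theorem) are genuine technical points that the paper silently glosses over as well.
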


If the tightened constraints are derived via a sampling approach, the results on chance constraint satisfaction do not hold with certainty but only with confidence $(1-\beta)^p$. 
Let $C_\PP = \left\{ u\in \R^m ~|~ \PP\left\{ [H]_j x_{k+1} \le [h]_j ~|~ x_k \right\}\ge 1-[\varepsilon]_j, ~ j=1,\ldots p \right\}$
and with $\eta_1$ derived through Proposition~\ref{prop:etaSampling} define $C_S = \left\{ v\in \R^m ~|~  H z_{k+1} \le \eta_1 \right\}$, then $\PP\left\{ C_S \subseteq C_{\PP} \right\} \ge (1-\beta)^p$, i.e. with at least probability $(1-\beta)^p$ the feasible set derived through sampling is a subset of the feasible set for the chance constraints.
Yet, note that when sampling is used to determine the constraint tightening, the closed-loop state transition is not Markovian any more and the results do not hold if chance constraints on $x_{k+1}$ given $x_0,\ldots, x_k$ are considered. The result on recursive feasibility still holds, as long as the true disturbance set does not exceed the assumed disturbance set $\W$.

\subsection{Numerical Example}
In the following, the performance and enlarged region of attraction of the proposed Stochastic MPC scheme is demonstrated.
To this end, the linear system of the form~\eqref{eqn:xsystem} with
\begin{equation*}
  A = \begin{bmatrix} 1 & 0.0075 \\ -0.143 & 0.996 \end{bmatrix}, \quad 
  B = \begin{bmatrix} 4.798 \\ 0.115 \end{bmatrix}, \quad 
  B_w = I_2,
\end{equation*}
is used as prediction and system model.%
\footnote{%
  In the context of linear Stochastic MPC, the system has previously been considered in~\cite{Cannon2011_StochasticTubesinMPC} as a linearized model of a DC-DC converter which, in the context of Nonlinear MPC, originally appeared in~\cite{Lazar2004_NonlinearPredContrOfDCDCconv,Lazar2008_ISSsuboptimalNMPCwithApplToDCDCconv}.}

The SMPC cost weights are $Q = \begin{bmatrix} 1 & 0 \\ 0 & 10 \end{bmatrix}$, $R=1$ and the prediction horizon is $T=8$. For disturbance attenuation in the predictions
, the unconstrained LQR is chosen.
The disturbance distribution is assumed to be a truncated Gaussian with the covariance matrix $\Sigma = 0.04^2 I_2$ truncated at $\|w\|^2 \le 0.02$.

For the robust set calculations 
we chose a polytopic outer approximation with 8 hyperplanes.
For the stochastic constraint tightening we used the described sampling approach with $\epsilon_l = 0.95\epsilon$, $\epsilon_u = 1.05\epsilon$ and confidence $\beta = 10^{-4}$. 
The constraint tightening was performed without explicitly bounding $\varepsilon_f$, the probability of the candidate solution not being feasible, as described in Section~\ref{ssec:feasCandidateSol}.

The time for computing the SMPC input using \texttt{quadprog} with the standard interior point algorithm in Matlab R2014b was approximately 4ms for each scheme on an Intel Core i7 with 3.4GHz.

\subsubsection*{Constraint Violation}
First, consider the single chance constraint
\begin{equation}
  \PP_k\left\{ [x]_1 \le 2 \right\} \ge 0.8
  \label{eqn:ex1Constraint}
\end{equation}
for the system above with initial state $x_0 = [2.5 ~ 2.8]^\top$.

In~\cite{Cannon2011_StochasticTubesinMPC} it has been shown that Stochastic MPC achieves lower closed-loop cost compared to Robust MPC. The approach presented in~\cite{Cannon2011_StochasticTubesinMPC}, using a confidence region, yields $14.4\%$ constraint violation in the first $6$ steps.

In contrast, the approach taken here, i.e. a direct constraint tightening, achieves a closed-loop operation tight at the constraint. A Monte Carlo simulation with $10^4$ realizations showed an average constraint violation in the first 6 steps of $20\%$ and an even lower closed-loop cost. Simulation results of the closed-loop system for $100$ random disturbances are shown in Figure~\ref{fig:SimResBoth}. The left plot shows the complete trajectories for a simulation time of $15$ steps. The right plot shows the constraint violation in more detail,~\eqref{eqn:ex1Constraint} is satisfied non-conservatively hence leaving more control authority for optimizing the performance.
\begin{figure}[htpb]
  \subfloat{{ 
    \includegraphics{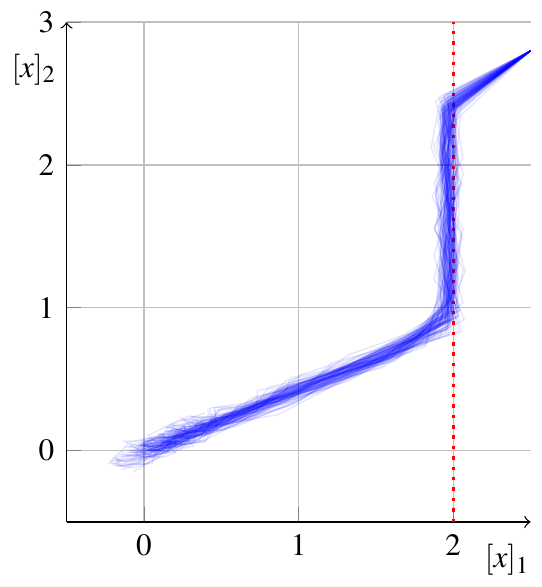}
    }}
  \hspace{-0.3cm}
  \subfloat{{ 
    \includegraphics{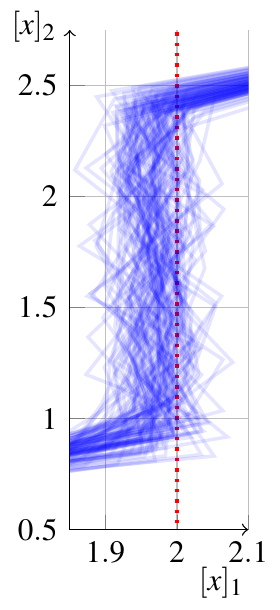}
  }}
  \caption{\small Left: Plot of closed-loop response with 100 different disturbance realizations and initial state $x_0 = [2.5 ~ 2.8]^\top$. \\
    Right: Detail showing the trajectories near the constraint ${\PP\{ [x]_1 \le 2\} \ge 0.8}$.
A Monte Carlo simulation with $10^4$ realizations showed an average constraint violation in the first 6 steps of $20\%$. }
  \label{fig:SimResBoth}
\end{figure}

For comparison, we remark that Robust MPC achieves $0\%$ constraint violation and that the LQ optimal solution violates the constraint $100\%$ in the first $3$ steps.

\subsubsection*{Feasible Region}
The main advantage of the proposed SMPC scheme is the increased feasible region.

To illustrate this feature, we assume the same setup as before, but with additional chance constraints on the state 
and hard input constraints
\begin{equation*}
  \begin{aligned}
    \PP_k\{~ [x]_1 &\le 2 ~\} \ge 0.8, & \PP_k\{~ -[x]_1 &\le 2 ~\} \ge 0.8, \\
    \PP_k\{~ [x]_2 &\le 3 ~\} \ge 0.8, & \PP_k\{~ -[x]_2 &\le 3 ~\} \ge 0.8, \\
              |u| &\le 0.2.
  \end{aligned}
\end{equation*}
According to the described setup, we allowed $5\%$ constraint violation in the predictions for the input and a probability of $0.05$ of not reaching the terminal region. In closed-loop operation the input was treated as hard constraint.

Figure~\ref{fig:CompFeasReg} shows the different feasible regions of Robust MPC, Stochastic MPC with constraint tightening using recursively feasible probabilistic tubes and the proposed method using probabilistic tubes and a first step constraint.
The feasible region of the proposed Stochastic MPC has $1.7$ times the size of the feasible region of \emph{standard} SMPC and $3.4$ times the size of the feasible region of Robust MPC. The Robust MPC scheme has been taken from~\cite{Mayne2005_RobustMPCofConstrLinSysWithBoundedDist} and is only included here for a more complete comparison, it is of course significantly smaller than having stochastic constraints.

In Figure~\ref{fig:SizeFeasReg} the decrease in the size of the feasible region is plotted, when a constraint tightening as described in Section~\ref{ssec:feasCandidateSol} is employed. Note that even for moderate values of $\varepsilon_f$ a significant increase in the feasible region can be gained.

\begin{figure}[htpb]
  \begin{center}
    \includegraphics{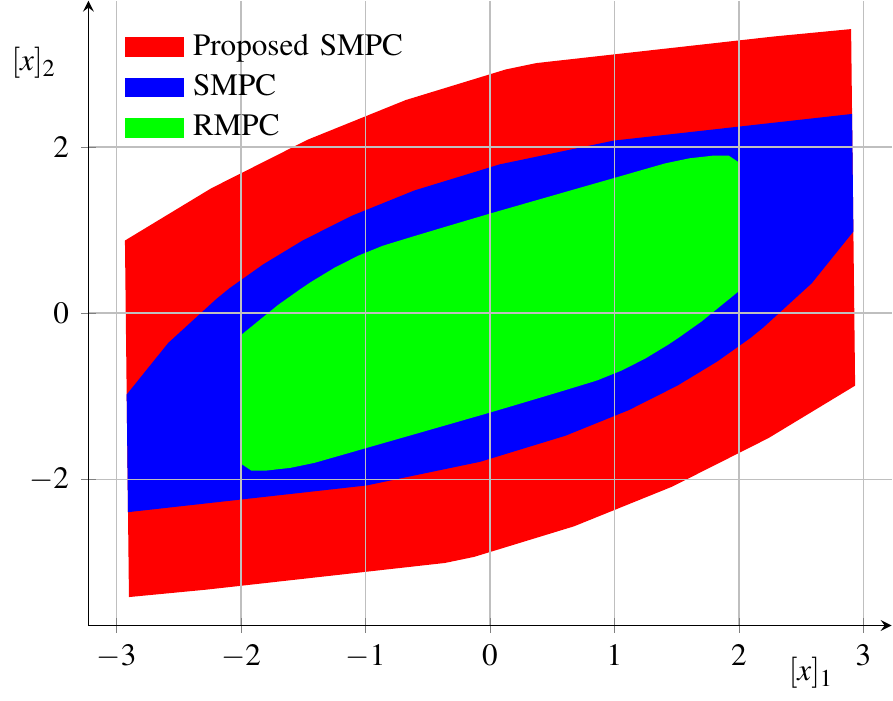}
  \end{center}
  \caption{\small Comparison of feasible region for Robust MPC, Stochastic MPC with recursively feasible probabilistic tubes and proposed Stochastic MPC with guaranteed recursive feasibility.}
  \label{fig:CompFeasReg}
\end{figure}

\begin{figure}[htpb]
  \begin{center}
    \includegraphics{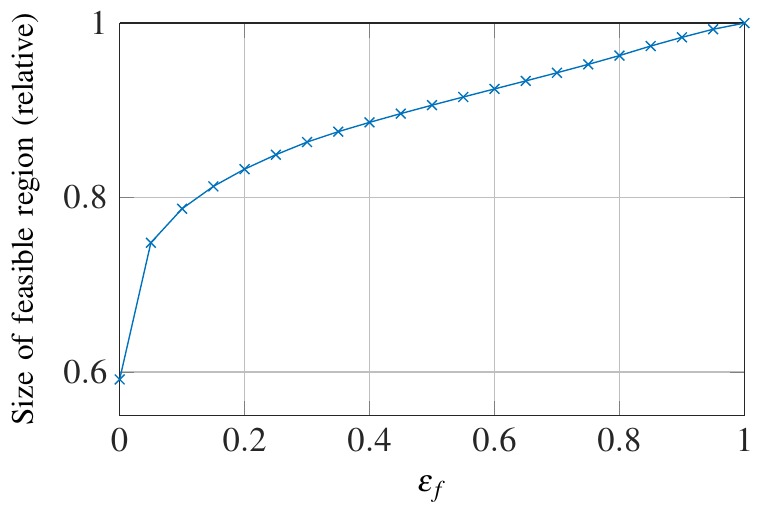}
  \end{center}
  \caption{\small Relative size of the feasible region plotted over $\varepsilon_f$, the maximal probability of the candidate solution not remaining feasible.}
  \label{fig:SizeFeasReg}
\end{figure}

\section{Conclusions and Future Work} \label{sec:Concl}
The proposed stabilizing Stochastic MPC algorithm provides a significantly increased feasible region through separating the requirements of recursive feasibility and stability.
The algorithm unifies the results obtained in~\cite{Kouvaritakis2010_ExplicitUseOfProbConstr} and~\cite{Korda2011_StronglyFeasibleSMPC} allowing to balance convergence speed and performance guarantees against the size of the feasible region. 
Absolute bounds of the disturbance are used to provide a first step constraint to guarantee robust recursive feasibility.
The stochastic information about the disturbance is used to prove an asymptotic bound on the closed-loop performance, which naturally resembles the bound obtained by the unconstrained LQ-optimal controller.
Furthermore, under mild assumptions, asymptotic stability with probability one of the set $\X_\infty$ has been proven, which is novel in the Stochastic MPC literature.

The online computational effort is equal to that of nominal MPC.
An efficient, broadly applicable solution strategy based on randomized algorithms is presented to solve, to the desired accuracy, the offline chance constrained problems for determining the constraint tightening.

Future work will be focused on improving the performance through an online evaluation of the expected cost, taking into account possible infeasibility of the optimized input trajectory. 
Similarly, the idea to incorporate a first step constraint to guarantee recursive feasibility could be further exploited. In the future this could be applied in a broader context, e.g. 
it could be nicely combined with ideas of (incomplete) decision trees which show very good results in practice~\cite{Lucia2013_MultiStageNMPCforSemiBatchReactor}, but have no recursive feasibility or stability guarantees.
For a broader applicability, it is necessary to relax the assumption of identically and independently distributed disturbance as well as allow for parametric uncertainty.
Finally, for unbounded additive disturbances, possible approaches could be a similar constraint tightening combined with a suitable penalty reformulation of the state constraints.

\bibliographystyle{IEEEtranNoUrl}
\bibliography{IEEEabrv,GlobalBib}
\end{document}